\tikzset{every fit/.append style=text badly centered}
\definecolor{codegreen}{rgb}{0,0.6,0}
\definecolor{codegray}{rgb}{0.5,0.5,0.5}
\definecolor{codepurple}{rgb}{0.58,0,0.82}
\definecolor{backcolour}{rgb}{0.95,0.95,0.92}
\tikzstyle{internal} = [draw, fill, shape=circle]
\tikzstyle{external} = [shape=circle]
\tikzstyle{square}   = [draw, fill, rectangle]
\tikzstyle{triangle} = [draw, fill, regular polygon, regular polygon sides=3, inner sep=3pt]
\tikzstyle{pentagon} = [draw, fill, regular polygon, regular polygon sides=5, inner sep=2pt, minimum size=14pt]
\newcommand{\Rmnum}[1]{\expandafter\@slowromancap\romannumeral #1@}
\newcommand{\Holant}{\operatorname{Holant}}
\newcommand{\holant}[2]{\ensuremath{\Holant\left(#1\mid #2\right)}}
\newcommand{\mf}[1]{\ensuremath{\mathcal{#1}}}
\newcommand{\mb}[1]{\ensuremath{\mathbb{#1}}}
\newcommand{\numP}{{\rm \#P}}
\newcommand{\innerp}[2]{\ensuremath{\la #1, #2 \ra }}
\newcommand{\la}{\langle}
\newcommand{\ra}{\rangle}
\newcommand{\tent}[2]{\ensuremath{#1 ^ {\otimes #2}}}
\newcommand{\tprime}[2]{\ensuremath{#1 ^ {\prime\otimes #2}}}
\begin{document}
\title{{ Restricted Holant
 Dichotomy on Domains 3 and  4}}


%
%
\author{Yin Liu \and
Austen Z. Fan \and
Jin-Yi Cai}
\authorrunning{Y. Liu et al.}
\institute{University of Wisconsin-Madison, Madison WI 53715, USA \\
\email{\{yinl,afan,jyc\}@cs.wisc.edu} 
}



%
%
\maketitle 
\begin{abstract}
 $\Holant^*(f)$ denotes a
 class of counting problems specified by a constraint function $f$.  We prove complexity dichotomy theorems for  $\Holant^*(f)$ in two settings:
(1) $f$ is any arity-3 real-valued function on input of
domain size 3. (2) 
 $f$ is any arity-3 $\{0,1\}$-valued function on input of
domain size 4.
\keywords{Holant problem \and Dichotomy \and Higher domain.}
\end{abstract}

\section{Introduction and background}
Counting problems arise in many branches in 
computer science, machine learning  and statistical physics. Holant problems encompass a broad class of
counting problems \cite{Backens21,BackensG20,CaiGW16,CaiL11,cai2009holant,CaiLX10,GuoHLX11,GuoLV13,MKJYC,valiant2006accidental,valiant2008holographic,Xia11}. For symmetric constraint functions (a.k.a. signatures) this is also equivalent to
edge-coloring models~\cite{szegedy2007edge,szegedy2010edge}.
These problems extend counting constraint satisfaction problems. Freedman, Lov\'{a}sz and Schrijver proved that some prototypical Holant problems, such as counting perfect matchings, cannot be expressed as vertex-coloring models known as graph homomorphisms~\cite{Freedman-Lovasz-Schrijver-2007,HellN04}. 
The complexity classification program of counting problems 
is to classify the computational complexity
of these problems.

Formally, a Holant problem on domain $D$ is defined on a graph $G=(V, E)$ where edges are
variables and vertices are constraint functions.
Given a set of  constraint functions $\mathcal{F}$ defined on $D$,
a \emph{signature grid} $\Omega=(G, \pi)$  assigns to each vertex $v \in V$ an $f_{v} \in \mathcal{F}$.
The aim is to compute the following partition function 
$$\Holant_\Omega = \sum_{\sigma: E \rightarrow D} \prod_{v \in V}f_v\left(\left.\sigma\right|_{E(v)}\right).$$
The computational problem is denoted by $\Holant (\mf{F})$.
E.g., on the Boolean domain, it is over all $\{0,1\}$-edge assignments. On domain size 3, it is over all $\{R,G,B\}$-edge assignments, signifying three colors Red, Green and Blue. On domain size 4, it is over all $\{R,G,B,W\}$-edge assignments.
On the Boolean domain, if every vertex has the 
\textsc{Exact-One}
function (which evaluates to 1 if
exactly one incident edge is 1, and evaluates to  0 otherwise),
then the partition function gives
the number of perfect matchings.
On domain size $k$, if every vertex has the 
\textsc{All-Distinct}
function,
then the partition function gives
the number of valid $k$-edge colorings.

A \emph{symmetric} signature is a function that is invariant under
any permutation of its variables.
The value of such a signature  depends only on the numbers of each color assigned to its input variables. 
The number of variables
is its arity; unary, binary, ternary
signatures have arities 1, 2, 3.
We denote a symmetric ternary signature $g$ on domain size 3 by a ``triangle" consisting of 10 numbers:
\begin{center}
\begin{tabular}{c c c c c c c}
     &&& $g_{3,0,0}$ &&&  \\
     && $g_{2,1,0}$ && $g_{2,0,1}$ &&\\
   & $g_{1,2,0}$ && $g_{1,1,1}$ && $g_{1,0,2}$ & \\
   $g_{0,3,0}$ && $g_{0,2,1}$ && $g_{0,1,2}$ && $g_{0,0,3}$ 
\end{tabular}    
\end{center}
where $g_{i,j,k}$ is the value on inputs having $i$ Red, $j$ Green and $k$ Blue.
Similarly, we denote a symmetric ternary signature $g$ on domain size 4 by a ``tetrahedron":

\begin{figure}[h!]
    \centering
    \includegraphics[scale=0.6]{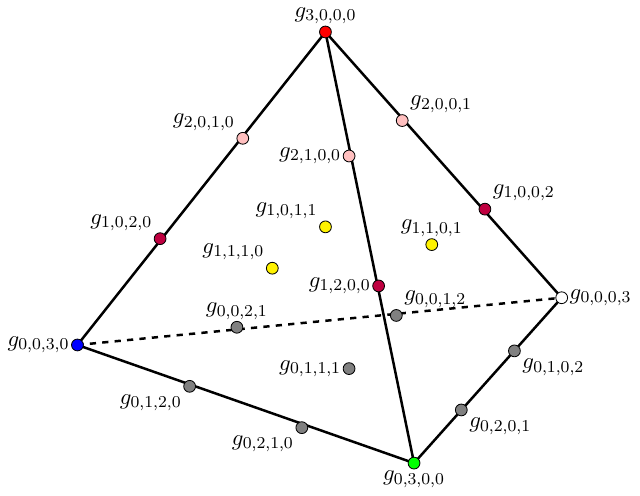}
    \label{pyramid}
\end{figure}

While much progress has been made for the
classification of counting CSP~\cite{bulatov2006dichotomy}  
\cite{jacm/CaiC17,cai2016nonnegative,dyer2013effective}, and some progress for
Holant problems~\cite{CaiGW14,CaiLX13,Fu-Yang-Yin}, classifying Holant problems on higher domains is particularly challenging.
One of the few existing work on a higher domain is \cite{CaiLX13}, in which a dichotomy for $\Holant^*(f)$ is proved where $f$ is a ternary complex symmetric function on domain size 3 and the $*$ means all unary functions are available. (Note that Holant problems with signatures
of arity $\le 2$ are all P-time tractable;  the interesting case where both
tractable and \#P-hardness occur starts with ternary signatures.)

In this work, we attempt to extend this to
Holant problems on domain size 4.
Our effort only met with partial success.
We are able to prove a
complexity dichotomy for $\Holant^*(f)$ for any $\{0,1\}$-valued symmetric ternary constraint function $f$ (see Theorem~\ref{main}). 

Our technique is to try to reduce a domain 4
problem to a domain 3 problem, and then analyze
the situation using the existing domain 3
dichotomy~\cite{CaiLX13}. To do so, we
will need to be able to construct (or interpolate) a suitable constraint function that allows us to effectively restrict the problem
to a domain 3 problem, where the new domain elements
are superpositions of old domain elements
under a holographic transformation. This turns out
to be a nontrivial task. 
And one reason that we cannot extend to a more general domain 4 dichotomy is that for some real-valued signatures, it is impossible to construct such a constraint function. 
On the other hand, for $\{0,1\}$-valued domain 4 signatures, we are able to succeed in this
plan (using several different constructions).

To carry out this plan, we use the 
domain 3
dichotomy~\cite{CaiLX13} extensively.
This motivates us to examine the domain 3
dichotomy more closely,
 when the constraint function $f$
is real-valued. 
Since the
domain 3
dichotomy~\cite{CaiLX13} applies to
all complex-valued functions it certainly
also applies to real-valued functions\footnote{There is
a slight issue that for a real-valued $f$,   $\Holant^*(f)$ naturally refers to having free  real-valued unary functions, while the existing 
 $\Holant^*$ dichotomy for complex-valued $f$ assumes all complex-valued unary functions are available for free.
In Lemma~\ref{star} we address this
technical difficulty.}. 
However, we found out that applying the
dichotomy for complex-valued functions directly
is very cumbersome,
so much so that the attempt to use it
for our exploration in domain 4 grinds to a halt.

So, we return to domain size 3, and found
that there is a cleaner form of the
dichotomy  of $\Holant^*(f)$ 
on domain size 3 for real-valued $f$.
This turns out to be 
a non-trivial adaptation, as
we prove that certain tractable forms
in the complex case \emph{cannot} occur
for real-valued signatures (see Theorem~\ref{co1}). In the proof of this real $\Holant^*$  dichotomy, orthogonal holographic transformations are heavily used.


%
Armed with this more effective form of
the domain 3 dichotomy, we return to  Holant problems on domain 4, and prove a dichotomy for $\Holant^*(f)$ where $f$ is $\{0,1\}$-valued (see Theorem~\ref{main}). 
We use several  
strategies that are more generally applicable and are  worth mentioning (see subsection~\ref{appro}).
For general real-valued ternary symmetric signatures on domain 4, we prove Theorem~\ref{d4tra}. It gives some broad classes of constraint functions that define P-time tractable Holant problems. We conjecture
that this is actually a complexity dichotomy. 


\vspace{-.1in}
\paragraph{\bf Some preliminaries}
We can picture a signature as a vertex with several dangling edges as its input variables. Connecting a unary signature $u$
to another signature $f$ of arity $r\ge 1$ creates
a signature of arity $r-1$. 
If $f$ is symmetric then this does not
depend on which variable (dangling edge of $f$) $u$
is connected to, and the resulting
signature is denoted by $\la f, u \ra$.
In particular, if $f$ is also
a unary, then $\la f, u \ra$
is a scalar value, equal to
the dot product of their signature entries.
One should note that, for
complex unary signatures, this dot product 
(without conjugation) is
not the usual inner product 
and it is possible that $\la u, u \ra =0$
for $u \ne 0$.
We call a vector $u$  \emph{isotropic} if
$\la u, u \ra =0$ (including $u=0$).

We now introduce  \emph{holographic transformations}. 
We use $\holant{\mathcal{R}}{\mathcal{G}}$ to denote bipartite Holant problems on bipartite graphs $H=(U,V,E)$, where each signature on a vertex in $U$ or $V$ is from $\mf{R}$  or  $\mf{G}$, respectively.
Suppose 
$T$ is an invertible matrix of the same size  as that of the domain.
We say that there is a holographic transformation from $\holant{\mf{R}}{\mf{G}}$ to $\holant{\mf{R}'}{\mf{G}'}$ by $T$,
if $\mf{R}' = \mf{R}T^{-1}$ 
and
$\mf{G}' = T\mf{G}$, where $\mf{R}T^{-1}=\{f\tent{(T^{-1})}{r(f)} \mid f \in \mathcal{R}\}$,  
$T\mf{G}=\{\tent{T}{r(f)}f \mid f \in \mathcal{G}\}$
and   $r(f)$ is the arity of $f$. 
Here each signature  is written
as a column/row vector in lexicographical order as a truth-table.
We also write $Tf$ for $\tent{T}{r(f)}f$ when the arity  $r(f)$ is clear.

\begin{theorem}[Valiant's Holant Theorem~\cite{valiant2008holographic}]
Suppose there  is a holographic transformation from $\holant{\mf{R}}{\mf{G}}$ to $\holant{\mf{R}'}{\mf{G}'}$, then $\holant{\mf{R}}{\mf{G}}$ $ \equiv_T \holant{\mf{R}'}{\mf{G}'}$, where $\equiv_T$ means equivalence up to a P-time reduction.
\end{theorem}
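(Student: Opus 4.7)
The plan is to prove the stronger pointwise identity $\Holant_\Omega = \Holant_{\Omega'}$ for every bipartite signature grid $\Omega$, which immediately delivers the P-time equivalence $\equiv_T$ since the reduction in either direction is just the syntactic relabeling of signatures on the same graph. Fix $\Omega = (H, \pi)$ on a bipartite graph $H = (U, V, E)$ with $R_u \in \mf{R}$ at each $u \in U$ and $G_v \in \mf{G}$ at each $v \in V$; in $\Omega'$, each $u$ carries $R'_u = R_u \tent{(T^{-1})}{r(R_u)}$ and each $v$ carries $G'_v = \tent{T}{r(G_v)} G_v$ on the same underlying graph.

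First I would expand each transformed signature as a tensor contraction against the original:
\[
R'_u(\sigma|_{E(u)}) = \sum_{\tau_u} R_u(\tau_u) \prod_{e \in E(u)} (T^{-1})_{\tau_u(e), \sigma(e)},
\qquad
G'_v(\sigma|_{E(v)}) = \sum_{\tau_v} G_v(\tau_v) \prod_{e \in E(v)} T_{\sigma(e), \tau_v(e)},
\]
where $\tau_u, \tau_v$ range over local assignments on the incident edges. Substituting into the defining sum for $\Holant_{\Omega'}$ and interchanging the outer $\sum_{\sigma : E \to D}$ with the inner local-assignment sums, the $\sigma$-sum factors by edge precisely because the graph is bipartite, so each edge receives one factor from the $R$-side and one from the $G$-side. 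The edge-local sum then collapses to
\[
\sum_{x \in D} (T^{-1})_{\tau_u(e), x}\, T_{x, \tau_v(e)} \;=\; (T^{-1}T)_{\tau_u(e), \tau_v(e)} \;=\; \delta_{\tau_u(e), \tau_v(e)}.
\]
These Kronecker deltas force $\tau_u(e) = \tau_v(e)$ on every shared edge, so the local pieces glue to a single global edge coloring $\sigma' : E \to D$, and the remaining expression is exactly $\Holant_\Omega$.

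The main obstacle is purely bookkeeping: fixing conventions for row versus column tensor indexing so that the $R$-side factor $(T^{-1})_{\tau_u(e), \sigma(e)}$ and the $G$-side factor $T_{\sigma(e), \tau_v(e)}$ line up to produce $T^{-1}T = I$ rather than the wrong product or the wrong index being summed. This is also why the bipartite split enters the statement: it guarantees exactly one $T^{-1}$ and one $T$ per edge, sharing the index $\sigma(e)$ that gets summed out. No structural property of $\mf{R}$ or $\mf{G}$ is used beyond this incidence pattern, and the reduction in either direction is immediate from the identity, since the underlying graph is unchanged.
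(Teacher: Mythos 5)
Your proof is correct and is in fact the standard proof of Valiant's Holant Theorem; the paper itself does not reprove this result but cites \cite{valiant2008holographic}, and your argument is essentially the one found there and in the standard expositions. The central step---expanding the transformed signatures, swapping the order of summation, factoring the $\sigma$-sum edge by edge (which requires the bipartite split so that each edge contributes exactly one $T^{-1}$ factor and one $T$ factor), and collapsing the edge-local sum to $T^{-1}T = I$ so that the resulting Kronecker deltas glue the local assignments $\tau_u, \tau_v$ back into a single global assignment---is exactly the right argument, and your conventions for row-versus-column tensor indexing are consistent with those the paper fixes ($\mf{R}' = \mf{R}T^{-1}$, $\mf{G}' = T\mf{G}$). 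Your concluding observation that the reduction in each direction is merely a relabeling of signatures on the same graph, so that the pointwise identity $\Holant_\Omega = \Holant_{\Omega'}$ immediately yields $\equiv_T$, is also exactly how the equivalence of the two counting problems is meant to be read.
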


Therefore, if there is a holographic transformation from $\holant{\mf{G}}{\mf{R}}$ to $\holant{\mf{G}'}{\mf{R}'}$, then one problem is in P iff the other one is, and similarly one problem is \numP-hard iff the other one is. For any general graph, we can make it bipartite by adding an additional vertex on each edge (thus forming the vertex-edge incidence graph), and assigning those new vertices the binary \textsc{Equality} signature $(=_2)$  of the corresponding domain size.
Note that for the binary \textsc{Equality},
 if $T$ is  orthogonal, then 
 it is unchanged under the holographic transformation by $T$.
 Hence, $\Holant(\mf{F}) \equiv_T \holant{=_2}{\mf{F}} \equiv_T 
\Holant(T\mf{F})$ for any orthogonal matrix $T$.

\section{A real dichotomy for $\Holant^*(f)$ on domain 3}
In this section we prove a complexity
dichotomy of $\Holant^*(f)$ for any
\emph{real-valued} 
symmetric ternary function $f$ over the domain $\{R,G,B\}$. We investigate
the three tractable forms of 
Theorems 3.1 and 3.2 in \cite{CaiLX13}
when $f$ is real-valued.  It turns out that they take more special forms,
and one  tractable case
for complex-valued $f$ does not occur
for real-valued  $f$. However,
complex tensors are still needed to express one of the two tractable forms. Lemmas~\ref{rt1},~\ref{rt2} and~\ref{rt3}
address each of the three tractable families.

\begin{lemma} \label{ct0}
For all $\beta \in \mathbb{C}^3$,  if $\la\beta, \beta\ra =0$ then there exists a 3-by-3 real orthogonal matrix $T$, such that $T\beta = c (1, i, 0) ^T$ where $c\in \mathbb{R}$.
\end{lemma}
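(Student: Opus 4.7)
The plan is to decode the isotropy condition into a statement about the real and imaginary parts of $\beta$ and then build $T$ from an orthonormal basis of $\mathbb{R}^3$. Write $\beta = a + \ii b$ with $a, b \in \mathbb{R}^3$. Since $\la \beta, \beta \ra$ is just $\beta^T \beta$ (no conjugation), expanding gives
\[
\la \beta, \beta \ra = \|a\|^2 - \|b\|^2 + 2\ii\,(a \cdot b),
\]
so the hypothesis $\la \beta, \beta \ra = 0$ is equivalent to the two real conditions $\|a\| = \|b\|$ and $a \cdot b = 0$. If $\beta = 0$, take $T = I$ and $c = 0$; otherwise $\|a\| = \|b\| = c$ for some $c > 0$ (note that one of $a, b$ being zero forces both to be zero), and $a/c, b/c$ form an orthonormal pair in $\mathbb{R}^3$.

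Now extend $\{a/c, b/c\}$ to an orthonormal basis $\{u_1, u_2, u_3\}$ of $\mathbb{R}^3$ with $u_1 = a/c$ and $u_2 = b/c$; such an extension exists in dimension $3$ (for instance, $u_3 = u_1 \times u_2$). Define $T \in O(3)$ as the real orthogonal matrix whose rows are $u_1^T, u_2^T, u_3^T$, so that $T u_1 = e_1$ and $T u_2 = e_2$. Then
\[
T\beta = T a + \ii\, T b = c\, T u_1 + \ii\, c\, T u_2 = c\,(1, \ii, 0)^T,
\]
as required. The construction is entirely routine; there is no real obstacle beyond separating real and imaginary parts and noting that the isotropy equation over $\mathbb{C}$ is exactly the orthonormality (up to scale) of $a$ and $b$ over $\mathbb{R}$, which makes the extension in $\mathbb{R}^3$ automatic.
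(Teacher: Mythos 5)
Your proof is correct and follows essentially the same route as the paper: decompose $\beta$ into real and imaginary parts, observe that isotropy over $\mathbb{C}$ is exactly equal norms plus orthogonality over $\mathbb{R}$, and build $T$ from the resulting orthonormal pair. The only difference is that you spell out the edge case $\beta = 0$ and the explicit construction of $T$ (via the cross product), which the paper leaves implicit.
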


\begin{proof}
Write $\beta = \gamma + \delta i$, $\gamma, \delta \in \mathbb{R}^3$. 
Then $ 0 = \la\beta, \beta\ra = \la\gamma,\gamma\ra - \la \delta ,\delta \ra + 2\la \gamma, \delta \ra i $.
Considering its real and imaginary parts separately,  $ \lVert \gamma \rVert  = \lVert \delta \rVert $ and $ \gamma \perp \delta $.  Then  there exists a 
real orthogonal $T$, 
such that
  $T\gamma = ce_1$ and $
    T\delta = c e_2$, 
where  $c = \lVert \gamma \rVert  \in \mathbb{R}$. It follows that
$  T\beta = T(\gamma + \delta i)  = c(1,i,0)^T$.
\end{proof}



\begin{lemma}  \label{rt1}
If there exist $\alpha, \beta, \gamma$ $\in$ $\mathbb{C}^3$, $s.t., f=  \tent{\alpha}{3}+ \tent{\beta}{3} + \tent{\gamma}{3}$, $\la\alpha,\beta\ra = \la\beta,\gamma\ra = \la\gamma, \alpha\ra = 0$,
and $f$ is real-valued, then there exist $\alpha', \beta', \gamma' \in \mathbb{R}^3, s.t., f=  \tprime{\alpha}{3}+ \tprime{\beta}{3} + \tprime{\gamma}{3}, \la\alpha',\beta'\ra = \la\beta',\gamma'\ra = \la\gamma', \alpha'\ra = 0$. Thus, there is a real orthogonal transformation $T$, such that $Tf = a  \tent{e_1}{3}+ b\tent{e_2}{3} +c \tent{e_3}{3}$, for some $a, b, c\in \mathbb{R}$.
\end{lemma}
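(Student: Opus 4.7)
The plan is to use the reality of $f$ together with uniqueness of an orthogonal symmetric rank-three decomposition to promote the complex decomposition to a real one. I would split on whether any of $\alpha,\beta,\gamma$ is isotropic, i.e.\ $\la v,v\ra=0$.

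In the non-isotropic case, the matrix $M=[\alpha\mid\beta\mid\gamma]$ satisfies $M^T M$ diagonal with nonzero entries and is therefore invertible. I would invoke (or prove inline) the standard uniqueness statement for such an orthogonal decomposition: the three rank-one summands are determined by $f$, so the vectors are determined up to permutation and multiplication by cube roots of unity. Applying this to $f=\bar f=\tent{\bar\alpha}{3}+\tent{\bar\beta}{3}+\tent{\bar\gamma}{3}$, and writing $(v_1,v_2,v_3)=(\alpha,\beta,\gamma)$, I obtain a permutation $\sigma$ of $\{1,2,3\}$ and cube roots $\zeta_j$ with $\bar v_j=\zeta_j v_{\sigma(j)}$ for each $j$.

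A non-identity $\sigma$ is ruled out in one line: if $\sigma(j)\ne j$, then $\la\bar v_j,v_j\ra=\zeta_j\la v_{\sigma(j)},v_j\ra=0$ by orthogonality, whereas $\la\bar v_j,v_j\ra=\sum_k \bar v_{j,k}v_{j,k}>0$ since $v_j\ne 0$ (this is the Hermitian norm squared). Hence $\sigma=\mathrm{id}$ and $\bar v_j=\zeta_j v_j$ for each $j$. A short phase calculation shows $v_j=\phi_j w_j$ for some $w_j\in\mathbb{R}^3$ and $\phi_j\in\mathbb{C}$ with $\phi_j^3\in\mathbb{R}$; letting $\alpha'_j$ be the real cube root of $\phi_j^3$ times $w_j$, and using $\tent{(-v)}{3}=-\tent{v}{3}$ to absorb a possible sign, I obtain real $\alpha_j'$ with $\tent{v_j}{3}=\tent{\alpha_j'}{3}$; orthogonality is preserved because the rescalings are nonzero scalars.

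In the isotropic case, I would apply Lemma~\ref{ct0} to move, via a real orthogonal transformation, a nonzero isotropic vector (say $\alpha$) into the form $c(1,i,0)^T$. Orthogonality of $\beta,\gamma$ to $\alpha$ then forces them into the shape $(-ib,b,e)^T$ and $(-id,d,g)^T$; a direct computation gives $\la\beta,\beta\ra=e^2$, $\la\gamma,\gamma\ra=g^2$, and $\la\beta,\gamma\ra=eg$, so $eg=0$ and at least one of $\beta,\gamma$ is isotropic and hence a scalar multiple of $\alpha$. A short check of specific real entries of $Tf$ (for instance $(Tf)_{133}$, $(Tf)_{233}$, $(Tf)_{111}$, and $(Tf)_{112}$) then forces any non-isotropic contribution and finally $c$ itself to vanish, so either $f=0$ or $f$ collapses to a single real rank-one tensor already covered by the non-isotropic case. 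Finally, given real pairwise-orthogonal $\alpha',\beta',\gamma'$, I normalize the nonzero ones, extend to an orthonormal basis of $\mathbb{R}^3$, and take $T$ to be the change-of-basis matrix; then $Tf=a\tent{e_1}{3}+b\tent{e_2}{3}+c\tent{e_3}{3}$ with $a=\lVert\alpha'\rVert^3$, $b=\lVert\beta'\rVert^3$, $c=\lVert\gamma'\rVert^3$. The main obstacle I anticipate is establishing the cited uniqueness of the orthogonal decomposition cleanly; without a convenient citation I would prove it via the critical-point characterization of $f(x,x,x)$ on the quadric $\la x,x\ra=1$, whose critical points (in the orthogonal non-isotropic case) are exactly the normalized summands up to cube roots of unity.
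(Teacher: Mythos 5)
Your plan takes a genuinely different route from the paper's and much of it is sound, but there are real gaps.

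The most concrete gap is the case split. ``None of $\alpha,\beta,\gamma$ isotropic'' versus ``a nonzero isotropic vector exists'' does not cover the case where the only isotropic vectors are zero (e.g.\ $\gamma=0$ with $\alpha,\beta$ non-isotropic): Lemma~\ref{ct0} cannot move the zero vector to $c(1,i,0)^T$ with $c\ne 0$, so your isotropic branch does not apply, and your non-isotropic branch was set up for all three being non-isotropic so it does not apply either. You would need to either drop zero summands and rerun with fewer terms (and re-justify your uniqueness step for two vectors), or --- the cleaner fix, and what the paper does --- replace non-isotropic vectors one at a time rather than all together. The second issue is the uniqueness of the orthogonal rank-three decomposition that your non-isotropic branch rests on: it is true, but the critical-point characterization of $f(x,x,x)$ on the complex quadric $\la x,x\ra=1$ is delicate (that quadric is not compact and the argument must be algebraic, not optimization-based). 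The tidy proof is simultaneous diagonalization: for $u\in\mathbb{C}^3$ the matrix $\la f,u\ra=\sum_j\la v_j,u\ra v_jv_j^T$ has the $v_j$ as eigenvectors with eigenvalues $\la v_j,u\ra\la v_j,v_j\ra$, distinct for generic $u$, so the lines are determined, and the cube-root ambiguity comes from $\tent{v}{3}=\tent{(\omega v)}{3}$. Notice this is exactly the mechanism the paper uses, but applied per vector with $u=e_i$: $M_i=\la f,e_i\ra$ is real symmetric, its eigenvalue $\alpha_i\la\alpha,\alpha\ra$ is therefore real, so $\alpha_i/\alpha_1$ is real and $\alpha$ is a complex scalar times a real vector; then $\la f,u\ra=\mu^3 uu^T$ forces $\mu^3\in\mathbb{R}$. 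That per-vector version makes the global uniqueness lemma unnecessary and handles zero summands for free.

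A smaller point: in your isotropic branch, the conclusion from the entry checks is that the \emph{coefficient} of the isotropic rank-one summand vanishes, not that ``$c$ itself'' vanishes. With $T\alpha=c(1,i,0)^T$ and $T\beta=-ib(1,i,0)^T$, one can have $c^3+(-ib)^3=0$ with $c,b\ne 0$, so the two summands cancel even though neither vector is zero. Your conclusion ($f$ collapses to a single real rank-one tensor or to $0$) is still correct. For comparison, the paper gets the same endpoint in one line: after making all non-isotropic vectors real, if a nonzero isotropic $\gamma$ remained then $\tent{\gamma}{3}$ would be a nonzero real tensor, so $\la\tent{\gamma}{3},\tent{\gamma}{3}\ra>0$, contradicting $\la\tent{\gamma}{3},\tent{\gamma}{3}\ra=\la\gamma,\gamma\ra^3=0$.
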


\begin{proof}
    Let $M_i = \la f, e_i \ra = \alpha_i \tent{\alpha}{2} + \beta_i \tent{\beta}{2} + \gamma_i \tent{\gamma}{2}, i = 1,2,3$, then $M_i$ is a real symmetric matrix.

    If there is any  $v$ among $\{\alpha, \beta, \gamma\}$ that is non-isotropic, i.e., $\la v, v \ra \ne 0$, then by symmetry,
    assume it is $\alpha$. Then $\alpha \ne 0$. At least one of $\alpha_i \ne 0$.
    Say $\alpha_1 \ne 0$.

    We have $M_i\alpha = \lambda_i \alpha$, where
    $\lambda_i  = \alpha_i \la\alpha, \alpha \ra$.  So $\lambda_i$ is an eigenvalue of a real symmetric matrix $M_i$, therefore it is real ($i=1,2,3$). 
    As $\la\alpha, \alpha \ra \ne 0$ and $\alpha_1 \ne 0$, 
    $\frac{\alpha_i}{\alpha_1} = \frac{\lambda_i}{\lambda_1}$
    is real and is well defined ($i = 1,2,3$).
    We can then write $\alpha = \mu u$, where $\mu \in \mathbb{C}, u \in \mathbb{R}^3$ and $\lVert u \rVert = 1$. As $\alpha \ne 0$,
    we have $\mu \ne 0$ and $\la u, \beta \ra  = \la u, \gamma \ra = 0$.
    
    Thus $ f = \mu^3 \tent{ u}{3}+ \tent{\beta}{3} + \tent{\gamma}{3}$. 
    We have $ \la f, u\ra = \mu^3 \tent{u}{2}$. Since $f$ and $u$ are both real, and $u \ne 0$, we have $\mu^3 \in \mathbb{R}$.
    Thus, $\mu^3 = t^3$
    for some real $t\in \mathbb{R}$. It follows that
    $\tent{\alpha}{3} = t^3\tent{u}{3} = \tent{(tu)}{3}$.
    
    Then we replace $\alpha$ with $tu$. 
    Similarly, if $\beta$ or $\gamma$ is a non-isotropic vector, we can replace it with a real vector, without changing $f$. 
    Thus we get a new form $f = \tent{\alpha}{3}+ \tent{\beta}{3} + \tent{\gamma}{3}$ where $\alpha, \beta, \gamma$ are either real or isotropic
    (and since the zero vector
    is real we may further assume $\alpha, \beta, \gamma$ are either real or nonzero isotropic.)

    If they are all real, then we can use a real orthogonal matrix $T$ to transform $f$, i.e., $Tf =a  \tent{e_1}{3}+ b\tent{e_2}{3} +c \tent{e_3}{3}$, for some $a,b,c \in \mathbb{R}$. Then we are done. 

    Now suppose there is at least one nonzero isotropic vector among $\{\alpha, \beta, \gamma\}$. W.o.l.o.g., we can assume $\gamma$ is nonzero and isotropic. By Lemma~\ref{ct0}, there exists a real orthogonal $T$, s.t., $T\gamma = r(1,i,0)^T, r \in \mathbb{R}\setminus\{0\}$ . 
    Then since $\la T\alpha, T\gamma \ra = \la T\beta, T\gamma \ra = 0$, $T\alpha, T\beta$ each must have the form  $(c, ci, d)^T$. 
    If, in addition, $\alpha$ is also isotropic, then $T\alpha$ must have the form $(c,ci,0)^T$, which is a multiple of $T\gamma$.
    As $\gamma$ is nonzero, $\alpha$ is hence a multiple of $\gamma$.
    Then $\alpha$ can be absorbed into $\gamma$ and form a new isotropic vector $\gamma'$, and $\la \gamma', \beta \ra =0$, $f = \tent{\beta}{3} + \tprime{\gamma}{3}$. 
    We have the same argument for $\beta$. So w.o.l.o.g., we can write $f = \tent{\alpha}{3}+ \tent{\beta}{3} + \tent{\gamma}{3} $, and there is \emph{at most one},
    and therefore, \emph{exactly one}, nonzero isotropic vector among $\{\alpha, \beta, \gamma\}$, and the others are real vectors and could be zero. 
   We have
    $f = \tent{\gamma}{3} + R$, where $R$ is some real-valued tensor. 
    Hence $\tent{\gamma}{3}$ is real.
Since $\gamma \ne 0$,
we have $\tent{\gamma}{3} \ne 0$.
But then, $\la \la \la \tent{\gamma}{3},
\gamma \ra, \gamma \ra, \gamma \ra= \la \tent{\gamma}{3},
\tent{\gamma}{3} \ra \ne 0$.
In particular,
$0 \ne \la \tent{\gamma}{3},
\gamma \ra =
\la \gamma,
\gamma \ra \tent{\gamma}{2}
= 0$. 
This is a contradiction.  
    
\end{proof}

\begin{lemma} \label{rt2}
 If there exist $\alpha, \beta_1, \beta_2$ $\in \mathbb{C}^3, s.t., f = \tent{\alpha}{3}+ \tent{\beta_1}{3} + \tent{\beta_2}{3}, \la\alpha, \beta_i\ra = \la\beta_i, \beta_i \ra = 0, i = 1,2$, and $f$ is real-valued,  then there is a real orthogonal transformation $T$, such that $cTf = \epsilon(\tent{\beta_0}{3} + \tent{\overline{\beta_0}}{3}) + \lambda\tent{e_3}{3}$, where $\beta_0 = (1,i,0)^T, \epsilon \in \{0,1\}$ and $c, \lambda \in \mathbb{R}, c \ne 0$. Thus,  there exist $\alpha \in \mathbb{R}^3$ and $\beta \in \mb{C}^3, s.t., f = \tent{\alpha}{3}+ \tent{\beta}{3} + \tent{\overline{\beta}}{3}$,  $ \la\alpha, \beta\ra = \la\beta, \beta \ra = 0$. 
    
\end{lemma}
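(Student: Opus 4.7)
The plan is to mirror Lemma~\ref{rt1}: first replace $\alpha$ with a real vector while keeping $f$ unchanged, then use reality of $f$ to force $\beta_1$ and $\beta_2$ to be complex conjugates of each other up to a cube root of unity, and finally apply Lemma~\ref{ct0} to rotate $\beta$ into the standard isotropic form $\beta_0 = (1,i,0)^T$ by a real orthogonal transformation, which automatically sends $\alpha$ onto $e_3$.

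\textbf{Step 1 (making $\alpha$ real).} Exactly as in Lemma~\ref{rt1}, consider the real symmetric matrices $M_j = \langle f, e_j\rangle = \alpha_j\alpha^{\otimes 2} + \beta_{1,j}\beta_1^{\otimes 2} + \beta_{2,j}\beta_2^{\otimes 2}$. Using $\langle\alpha,\beta_i\rangle = 0$, one computes $M_j\alpha = \alpha_j\langle\alpha,\alpha\rangle\alpha$, so each $\alpha_j\langle\alpha,\alpha\rangle$ is a real eigenvalue of $M_j$. If $\alpha$ is non-isotropic, the ratios $\alpha_j/\alpha_k$ are real, so $\alpha = \mu u$ with $u\in\mathbb{R}^3$; contracting $f$ with $u$ gives $\mu^3\in\mathbb{R}$, so we may replace $\alpha$ by a real vector $tu$ satisfying the same orthogonality. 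If instead $\alpha$ is nonzero isotropic, Lemma~\ref{ct0} lets us assume (after a real orthogonal change of coordinates) $\alpha = c\beta_0$ with $c\in\mathbb{R}\setminus\{0\}$; then $\langle\beta_i,\alpha\rangle = 0$ and $\langle\beta_i,\beta_i\rangle = 0$ together force each $\beta_i$ to be a scalar multiple of $\beta_0$, so $f$ collapses to a scalar multiple of the non-real tensor $\beta_0^{\otimes 3}$, and reality of $f$ forces $f = 0$, for which the conclusion is trivial. Henceforth we assume $\alpha\in\mathbb{R}^3$.

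\textbf{Step 2 (conjugate pairing and normalization).} Reality of $f$ now gives $g := \beta_1^{\otimes 3}+\beta_2^{\otimes 3} = \overline{\beta_1}^{\otimes 3}+\overline{\beta_2}^{\otimes 3}$. If $g = 0$, take $\beta = 0$. Otherwise, viewing $g$ as the cubic form $(\beta_1\cdot X)^3+(\beta_2\cdot X)^3 = \prod_{k=0}^{2}\bigl((\beta_1+\omega^k\beta_2)\cdot X\bigr)$ with $\omega = e^{2\pi i/3}$, unique factorization in $\mathbb{C}[X_1,X_2,X_3]$ (equivalently, the essential uniqueness of the symmetric rank-$\le 2$ decomposition of a cubic tensor) implies that $\{\beta_1,\beta_2\}$ equals $\{\overline{\beta_1},\overline{\beta_2}\}$ as multisets up to scaling each vector by a cube root of unity. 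The self-conjugate case $\overline{\beta_i} = \omega\beta_i$ is impossible for nonzero isotropic $\beta_i$: writing $\beta_i = e^{i\phi}v$ with $v\in\mathbb{R}^3$, isotropy gives $e^{2i\phi}\lVert v\rVert^2 = 0$, so $\beta_i = 0$. Hence $\overline{\beta_1} = \omega^{-1}\beta_2$ for some cube root $\omega$; setting $\beta := \beta_1$ so that $\beta_2 = \omega\overline{\beta}$ and $\beta_2^{\otimes 3} = \overline{\beta}^{\otimes 3}$, we obtain $f = \alpha^{\otimes 3}+\beta^{\otimes 3}+\overline{\beta}^{\otimes 3}$ with $\langle\alpha,\beta\rangle = \langle\beta,\beta\rangle = 0$. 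Finally, Lemma~\ref{ct0} supplies a real orthogonal $T$ with $T\beta = c_0\beta_0$, $c_0\in\mathbb{R}$, and hence $T\overline{\beta} = c_0\overline{\beta_0}$; since $\alpha$ is real and orthogonal (in the real inner product) to both $\Re(\beta)$ and $\Im(\beta)$, $T\alpha = te_3$ for some $t\in\mathbb{R}$. Thus $Tf = t^3 e_3^{\otimes 3}+c_0^3(\beta_0^{\otimes 3}+\overline{\beta_0}^{\otimes 3})$; taking $c = c_0^{-3}$, $\epsilon = 1$, $\lambda = (t/c_0)^3$ when $c_0\ne 0$, or $c = 1$, $\epsilon = 0$, $\lambda = t^3$ when $c_0 = 0$, yields the claimed form.

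The principal obstacle is the uniqueness argument in Step 2: tracking the cube-root-of-unity ambiguities and ruling out the self-conjugate pairing (which would collide with isotropy) is the delicate step, while Step 1 parallels Lemma~\ref{rt1} and the final normalization via Lemma~\ref{ct0} is routine.
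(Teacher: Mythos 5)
Your proof is essentially correct and takes a genuinely different route from the paper's for the key step. The paper proceeds by a hands-on case analysis: after forcing $\alpha$ real and rotating it onto $e_3$, it splits on $\alpha \ne 0$ (reading off the form of $T\beta_i$ directly from orthogonality and isotropy, then matching real and imaginary parts of $u^3, v^3$) versus $\alpha = 0$ (where it studies the real symmetric matrices $M_i = \la Tf, e_i\ra$, shows $\beta_{1i}'\beta_{2i}'\la\beta_1',\beta_2'\ra^2$ is real via eigenvalues of $M_i^2$, and then grinds out the conclusion entry by entry). Your Step 2 replaces both of these cases with a single uniform argument: $g := f - \alpha^{\otimes 3}$ is a real rank-$\le 2$ symmetric cubic tensor, and the essential uniqueness of its rank-$\le 2$ Waring decomposition forces $\{\beta_1,\beta_2\}$ and $\{\overline{\beta_1},\overline{\beta_2}\}$ to coincide up to cube roots of unity; the self-conjugate pairing is then killed by isotropy. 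This is conceptually cleaner and avoids the computation-heavy Case 2 of the paper. Your Step 1 also differs slightly — where the paper absorbs an isotropic $\alpha$ into $\beta_1$ and continues, you rotate $\alpha$ itself to $c\beta_0$ and observe that orthogonality plus isotropy collapse all three vectors onto a line, so $f = 0$; both work.

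One gap worth flagging: you write that unique factorization of $g(X)=\prod_{k}\bigl((\beta_1+\omega^k\beta_2)\cdot X\bigr)$ in $\mathbb{C}[X_1,X_2,X_3]$ is ``equivalent'' to the essential uniqueness of the rank-$\le 2$ Waring decomposition. That is not quite right. Unique factorization only pins down the three linear forms up to individual scalars $\lambda_0\lambda_1\lambda_2 = 1$, and recovering $\beta_1,\beta_2$ from the scaled factors $\lambda_k^{-1}(\beta_1+\omega^k\beta_2)$ involves linear combinations, not scalar multiples, of $\beta_1,\beta_2$; so the claim ``$\{\beta_1,\beta_2\}=\{\overline{\beta_1},\overline{\beta_2}\}$ up to cube roots'' does not follow from factorization alone. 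What you actually need is (i) that ${\rm span}(\beta_1,\beta_2) = {\rm span}(\overline{\beta_1},\overline{\beta_2})$, which does follow from unique factorization since the span of the linear factors is intrinsic to $g$, and then (ii) Sylvester's theorem on the uniqueness of the rank-2 Waring decomposition of a binary cubic with distinct roots (distinctness holds since $\beta_1,\beta_2$ are independent — and if they were dependent, $g$ would be a real scalar multiple of an isotropic cube, forcing $g=0$, which you should say explicitly rather than sweep into the ``otherwise'' branch). With (i)+(ii) inserted in place of the ``equivalently'' parenthetical, the argument closes cleanly and the rest of the proof — ruling out the self-conjugate pairing, normalizing by Lemma~\ref{ct0}, and handling $c_0 = 0$ — is correct.
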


\begin{proof}
    First we assume $\beta_1 = \beta_2 =0$. As $f$ is real, by a similar argument as in the proof of Lemma~\ref{rt1}, we know that there exists an $\alpha' \in \mathbb{R}^3, f=  \tent{\alpha}{3} = \tprime{\alpha}{3}$. Thus there is a real orthogonal $T$, $T\alpha' = (0,0,d)^T, d\in \mathbb{R}$, so $Tf = \tent{(T\alpha')}{3}  = d^3 \tent{e_3}{3}$.

    Now w.o.l.o.g., we can assume $\beta_1 \ne 0$. 
    
    Suppose $\la\alpha, \alpha \ra = 0$.
    By Lemma~\ref{ct0}, there exists a real orthogonal transformation $T$, such that $T\beta_1=c(1,i,0)^T$ 
    for some $c \ne0$. Since  $\la T\alpha, T\beta_1 \ra =0$, and that $\alpha$ is also isotropic,  $T\alpha$ must have the form  $(a,ai,0)^T$, thus a multiple of $T\beta_1$.
    So, $\alpha$  is a multiple of $\beta_1$. Then $\alpha$ can be absorbed into $\beta_1$ and form a new isotropic vector $\mu\beta_1$ for some $\mu$.
    Then we replace $\beta_1$ with $\mu \beta_1$ and $f = \tent{ \beta_1}{3} + \tent{\beta_2}{3}$. 

    Else, $\la \alpha, \alpha \ra \ne 0$, then $\alpha \ne 0$. Let $M_i = \la f, e_i\ra, i = 1,2,3$, then $M_i$ is a real symmetric matrix. $M_i \alpha = \alpha_i\la \alpha, \alpha\ra \alpha$. 
    By the same argument as in the proof of Lemma~\ref{rt1},  there exists an $\alpha'\in \mathbb{R}^3, s.t., \tent{\alpha}{3} = \tprime{\alpha}{3}$. 

    Hence in both cases, we get a new form $f = \tent{\alpha}{3}+ \tent{\beta_1}{3} + \tent{\beta_2}{3}$, where $\alpha \in \mathbb{R}^3$ is real (possibly 0), $\la\alpha, \beta_i\ra = \la\beta_i, \beta_i \ra = 0, i = 1,2$. 
    Since $\alpha$ is real, there is some real orthogonal $T$, $T\alpha = te_3 = (0,0,t)^T, t\in \mathbb{R}$. If $\beta_1 = \beta_2 = 0$, we are done. Thus we may assume $\beta_1 \ne0$. 

\vspace{.1in}
    \noindent{\bf Case 1:}
$\alpha \ne 0$, i.e.,  $t\ne 0$. 
\nopagebreak
    Since $\la T\beta_i, T\alpha \ra = 0, i=1,2$ and $\beta_i$ is isotropic, we have $T\beta_1 = u(1, \pm i, 0)^T$, $T\beta_2 = v(1,\pm i, 0)^T, u\ne 0$. 
    If $\beta_2$ is a multiple of $\beta_1$, it can be absorbed into $\beta_1$ and form a new isotropic vector $s(1, \pm i, 0)^T$ for some $s$. Then we get $Tf = \tent{(te_3)}{3} + s^3\tent{((1,\pm i, 0)^T)}{3}$.
    As $Tf$ is real and $t$ is real, we get $s=0$ and we are done.
    Else, $\beta_1, \beta_2$ are independent, i.e., $Tf = \tent{(te_3)}{3} + u^3\tent{((1,\pm i,0)^T)}{3} + v^3\tent{((1, \mp i,0)^T)}{3}$ 
    where $uv\ne0$. 
    As $Tf$ is real, it follows that 
    $u^3 + v^3 \in \mathbb{R}$
    and
    $u^3 - v^3 =0$, and
    thus 
   $u^3 = v^3 \in \mathbb{R}$.
    So $\frac{1}{u^3}Tf =\tent{\beta_0}{3} + \tent{\overline{\beta_0}}{3} + \frac{t^3}{u^3}\tent{e_3}{3}$, where $\beta_0 = (1,i,0)^T$.

\vspace{.05in}
\noindent{\bf Case 2:}
$\alpha = 0$, i.e.,  $t = 0$. 
\nopagebreak
    We have $f = \tent{\beta_1}{3} + \tent{\beta_2}{3}$, $\la\beta_i, \beta_i\ra = 0$, $i=1,2$, 
    and $\beta_1\ne 0$. 
    By Lemma~\ref{ct0}, there exists a real orthogonal $T$, $T\beta_1 = u(1,i,0)^T, u \in \mathbb{R} \setminus \{0\}$. 
    If $\beta_2 =0$, then $Tf = \tent{T\beta_1}{3} = u^3\tent{((1,i,0)^T)}{3}$, where the LHS is real but the RHS is not,
    which is a contradiction. 
    So we have $\beta_2 \ne 0$.
    Let $\beta_i' = T\beta_i$, and we have $Tf = \tprime{\beta_1}{3} + \tprime{\beta_2}{3}$.
    Let $M_i = \la Tf, e_i\ra = \beta_{1i}'\tprime{\beta_1}{2} + \beta_{2i}'\tprime{\beta_2}{2}, i = 1,2,3$. Both $\beta_1',
    \beta_2' \ne 0$.
    Then,
    \[
    \begin{cases}
        M_i\beta_1' = \lambda_{2i}\beta_2' \\
        M_i\beta_2' = \lambda_{1i}\beta_1',
    \end{cases}
\mbox{ where } \begin{cases}
        \lambda_{1i} = \beta_{1i}'\la \beta_1', \beta_2' \ra \\
        \lambda_{2i} = \beta_{2i}'\la\beta_1', \beta_2'\ra .
    \end{cases}
\]
    
Applying $M_i$ twice, we get  $ M_i^2\beta_1' = \lambda_{2i}\lambda_{1i} \beta_1'$.    
    Since $M_i$ is real symmetric, so is $M_i^2$.
    As $\beta_1' \ne 0$,
  $ \lambda_{2i}\lambda_{1i} $ is an eigenvalue of $M_i^2$, therefore real, i.e., $\beta_{1i}'\beta_{2i}'\la\beta_1', \beta_2'\ra^2$ is real, $i=1,2,3$. 
    Recall $\beta_1' = u(1,i,0)^T$, and now let $\beta_2' = (x,y,z)^T \in \mathbb{C}^3$. Let
    $\tau= x+yi$, and $\mu = \la\beta_1'$, $\beta_2'\ra = u\tau$.
    Then we have $ \begin{cases}
        \beta_{11}'\beta_{21}'\mu^2 = ux \cdot u^2\tau^2 = u^3\tau^2x \\
        \beta_{12}'\beta_{22}'\mu^2 = uiy\cdot u^2\tau^2 = u^3\tau^2yi, 
    \end{cases}$ both of which are real. 
    Since  $u\in \mathbb{R} \setminus\{0\}$, we know $ \tau^2x, \tau^2yi \in \mathbb{R}$. 
    
    If $\tau = 0$, then $ y= xi$ and hence $z = 0 $ (because $\beta_2'$ is isotropic). 
    It follows that $\beta_2'$ can be absorbed into $\beta_1'$. 
    We can then rewrite $ Tf = \lambda\tent{((1,i,0)^T)}{3}$ for some $\lambda$. As $Tf$ is real, we know $ \lambda = 0$ and we are done.
    
    Now we can assume $\tau \ne 0$. Since both $\tau^2x, \tau^2yi \in \mathbb{R}$,  adding them  we know $\tau^3$ is real. Then for some $k \in \{0, 1, 2\}$,
    $\omega^k \tau \in \mathbb{R}$,
    where $\omega^3=1$. 
    Replacing $\beta_2'$
    by $\omega^k \beta_2'$, 
 which satisfies $\tent{(\omega^k \beta_2')}{3} = \tprime{\beta_2}{3}$,
    we may assume
    $\tau \in \mathbb{R} \setminus \{0\}$.

    We have $y = (x-\tau)i$. Since $\tau^2x \in \mathbb{R}$, we know $x \in \mathbb{R}$. 
    Because $Tf = \tprime{\beta_1}{3} + \tprime{\beta_2}{3} $ is real, we know $\begin{cases}
        (Tf)_{2,1,0} = u^3i + x^2 y = (u^3 + x^2(x-\tau))i\in \mathbb{R}\\
        (Tf)_{0,3,0} = y^3 - u^3i = -(u^3 + (x-\tau)^3)i\in \mathbb{R}.
    \end{cases}$ 
    From the fact that 
    $x,\tau \in \mathbb{R}$ and the above relations,
    we know $ x-\tau = - u$ (which gives $y=-ui$),
    and $x = \pm u$. 
    As $\beta_2'$ is isotropic, it follows that $ z=0$ and hence $\beta_2' = \pm u(1, \mp i, 0) ^T$. 
    If it is $+u(1, - i, 0) ^T$,
    then $\frac{1}{u^3}Tf = \tent{\beta_0}{3} + \tent{\overline{\beta_0}}{3}$.
    If it is $-u(1, + i, 0) ^T$,
    then $f = 0$.
    
\end{proof}

\begin{lemma} \label{rt3}
 If $f$ is a real-valued signature,  s.t., there exist $\beta, \gamma \in \mb{C}^3$,  $f = f_\beta + \tent{\beta}{2}\otimes\gamma + \beta \otimes \gamma \otimes \beta + \gamma \otimes \tent{\beta}{2}$, where $\beta \ne 0$, $\la\beta, \beta\ra = 0$, and $f_\beta$ is a complex ternary signature satisfying $\la f_\beta, \beta \ra = 0$,  then there exists a real orthogonal transformation $T$ such that $Tf = \lambda \tent{e_3}{3}$, $\lambda \in \mathbb{R}$. It implies that there is an $\alpha \in \mathbb{R}^3$, $f = \tent{\alpha}{3}$.
    
\end{lemma}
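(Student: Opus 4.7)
The plan is to use Lemma~\ref{ct0} to normalize $\beta$, expand $Tf$ in a convenient complex basis of symmetric ternary tensors, and then use realness of $f$ to collapse all but one term.

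First, by Lemma~\ref{ct0}, there is a real orthogonal $T$ with $T\beta=c(1,i,0)^T$ for some $c\in\mathbb{R}\setminus\{0\}$. Since $T$ preserves $\la\cdot,\cdot\ra$, applying $T$ turns the given decomposition of $f$ into one of $Tf$ of the same shape, with $\beta,\gamma,f_\beta$ replaced by $T\beta,T\gamma,Tf_\beta$, and the relations $\la Tf_\beta,T\beta\ra=\la T\beta,T\beta\ra=0$ still hold. Absorbing the real scalar $c^2$ into the $\gamma$-factor of the three symmetric cross terms, I reduce to the case $\beta=(1,i,0)^T$ in the decomposition of $Tf$, so that $\overline{\beta}=(1,-i,0)^T$. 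The goal then becomes $Tf=\lambda\tent{e_3}{3}$ for some $\lambda\in\mathbb{R}$.

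Second, use $\{\beta,\overline{\beta},e_3\}$ as a complex basis of $\mathbb{C}^3$. Its $\binom{5}{2}=10$ symmetric tensor products indexed by size-$3$ multisets form a basis of the $10$-dimensional space of symmetric ternary tensors on $\mathbb{C}^3$. Using $\la\beta,\beta\ra=\la e_3,\beta\ra=0$ and $\la\overline{\beta},\beta\ra=2$, a basis tensor $\mathrm{Sym}(v_1\otimes v_2\otimes v_3)$ is annihilated by $\la\cdot,\beta\ra$ precisely when its multiset contains no $\overline{\beta}$. Hence the $f_\beta$-part of $Tf$ lies in the span of the four tensors $\tent{\beta}{3}$, $\mathrm{Sym}(\tent{\beta}{2}\otimes e_3)$, $\mathrm{Sym}(\beta\otimes\tent{e_3}{2})$, $\tent{e_3}{3}$. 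Writing $\gamma=p\beta+q\overline{\beta}+re_3$ and expanding, the three cross terms sum to $3p\tent{\beta}{3}+q\,\mathrm{Sym}(\tent{\beta}{2}\otimes\overline{\beta})+r\,\mathrm{Sym}(\tent{\beta}{2}\otimes e_3)$. So $Tf$ is supported on the five basis tensors
\[\mathcal{B}=\{\tent{\beta}{3},\ \mathrm{Sym}(\tent{\beta}{2}\otimes\overline{\beta}),\ \mathrm{Sym}(\tent{\beta}{2}\otimes e_3),\ \mathrm{Sym}(\beta\otimes\tent{e_3}{2}),\ \tent{e_3}{3}\}.\]

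Finally, apply $Tf=\overline{Tf}$. Complex conjugation swaps $\beta\leftrightarrow\overline{\beta}$ while fixing $e_3$, so it sends each basis tensor to its conjugate-multiset partner. Of the five tensors in $\mathcal{B}$ only $\tent{e_3}{3}$ is self-conjugate; the other four map respectively to $\tent{\overline{\beta}}{3}$, $\mathrm{Sym}(\beta\otimes\tent{\overline{\beta}}{2})$, $\mathrm{Sym}(\tent{\overline{\beta}}{2}\otimes e_3)$, $\mathrm{Sym}(\overline{\beta}\otimes\tent{e_3}{2})$, none of which lies in $\mathcal{B}$. By linear independence of the 10 basis tensors, matching coefficients in $Tf=\overline{Tf}$ forces the four non-self-conjugate coefficients of $Tf$ to vanish and the $\tent{e_3}{3}$ coefficient $\lambda$ to equal $\overline{\lambda}$. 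Hence $Tf=\lambda\tent{e_3}{3}$ with $\lambda\in\mathbb{R}$, and $\alpha:=\lambda^{1/3}T^{-1}e_3\in\mathbb{R}^3$ gives $f=\tent{\alpha}{3}$. The main subtlety is identifying the kernel of $\la\cdot,\beta\ra$ within the 10-dimensional basis and confirming that conjugation carries $\mathcal{B}\setminus\{\tent{e_3}{3}\}$ outside $\mathcal{B}$; once that bookkeeping is done, realness of $f$ closes out the proof.
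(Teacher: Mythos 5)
Your proof is correct, and it takes a genuinely different route from the paper's. The paper first proves the auxiliary fact $\la\gamma,\beta\ra=0$ by an eigenvalue argument: from $M_i\beta=\beta_i\la\gamma,\beta\ra\,\beta$ and the realness of the symmetric matrices $M_i=\la f,e_i\ra$, it concludes that if $\la\gamma,\beta\ra\ne 0$ then $\beta$ would be a scalar multiple of a real vector, contradicting $\beta\ne 0$ and $\la\beta,\beta\ra=0$. This gives $\la f,\beta\ra=0$; after normalizing $T\beta=t\beta_0$ via Lemma~\ref{ct0}, the paper writes out the explicit ``triangle'' form forced by $\la Tf,\beta_0\ra=0$ (four free parameters $a,b,c,d$) and reads off $a=b=c=0$ from realness. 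You instead normalize first, expand $Tf$ in the complex basis $\{\beta,\bar\beta,e_3\}$ of symmetric 3-tensors, and let complex conjugation do all the work at once: the single coefficient on $\Sym(\tent{\beta}{2}\otimes\bar\beta)$ that your argument kills is exactly (a multiple of) $\la\gamma,\beta\ra$, so your conjugation step subsumes the paper's separate eigenvalue argument. The paper's version is more elementary (it works entry-by-entry in the triangle) while yours is more structural; both land on $Tf=\lambda\tent{e_3}{3}$ with $\lambda\in\mathbb{R}$. One point you should make explicit when writing this up: the kernel of $\la\cdot,\beta\ra$ on symmetric 3-tensors is \emph{exactly} the four-dimensional span of the $\bar\beta$-free monomials (not merely contains it) --- this follows since the six $\bar\beta$-containing basis monomials map to the six linearly independent symmetric 2-tensors under the contraction --- and that is what justifies restricting $f_\beta$ to those four basis elements.
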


\begin{proof}
Let $\beta = (\beta_1, \beta_2, \beta_3)^T$, we have 
$\la f, \beta \ra = \la \gamma, \beta  \ra \tent{\beta}{2}$.
Let $M_i = \la f, e_i \ra$, for $i = 1,2,3 $, so $M_i$ is real symmetric.
Then $ M_i\beta = \la \la f, e_i \ra, \beta \ra = \la \la f, \beta\ra, e_i\ra = \beta_i\la \gamma , \beta \ra \beta$, for $i = 1,2,3$.   
As $\beta \ne 0$, we know $\beta_i\la \gamma, \beta \ra$ is a real eigenvalue of $M_i$.

Suppose $\la  \gamma, \beta \ra \ne 0$. Since $\beta \ne 0$, we can w.o.l.o.g. assume $\beta_1 \ne 0$. Then $ \frac{\beta_i}{\beta_1} = \frac{\beta_i\la \gamma, \beta \ra}{\beta_1\la \gamma, \beta \ra}$ is real and well defined ($i=1,2,3$). We can then write $\beta = \lambda u$, where $\lambda \in \mathbb{C}, u \in \mathbb{R}^3$ and $\lambda \ne 0, u \ne 0$. In particular,  $0 \ne  \lambda^2 \la u, u \ra = \la \beta, \beta \ra = 0$. 
This is a contradiction.

So we have $\la \gamma, \beta \ra  = 0$. Then,  $\la  f, \beta\ra = 0$. 
From Lemma~\ref{ct0}, we know there exists some real orthogonal matrix $T$ such that $T\beta = t\beta_0 $ where $t\in \mathbb{R}, \beta_0 = (1,i,0)^T$. Since $\beta \ne 0$, we have $t \ne 0$.
Then $0 =  \la f, \beta \ra = \la  T\beta, Tf\ra$. So $Tf$ has the form  
\begin{center} 
\begin{tabular}{c c c c c c c}
     &&& $a$ &&&  \\
     && $ai$ && $b$ &&\\
   & $-a$ && $bi$ && $c$ & \\
   $-ai$ && $-b$ && $ci$ && $d$ 
\end{tabular}    
\end{center}
Since $Tf$ is real, we get $a = b = c = 0$. Thus, $Tf = d\tent{e_3}{3}$ for some $d\in \mathbb{R}$.

\end{proof}

\begin{theorem} \label{co1}
Let $f$ be a 
\emph{real-valued} 
symmetric ternary function over domain $\{R,G,B\}$. Then $\Holant^*(f)$ is \#P-hard unless the function $f$ in expressible as one of the following two forms, in which case the problem is in FP. 
\begin{enumerate}
    \item $f = \tent{\alpha}{3} + \tent{\beta}{3} + \tent{\gamma}{3}$ where $\alpha, \beta, \gamma \in \mathbb{R}^3$ and $\la\alpha, \beta\ra = \la\beta,\gamma\ra = \la\gamma, \alpha\ra = 0$.
    \item $f = \tent{\alpha}{3} + \tent{\beta}{3} + \tent{\overline{\beta}}{3}$ where $\alpha \in \mathbb{R}^3$, $\la\alpha, \beta\ra = \la\beta,\beta\ra = 0$.
\end{enumerate}
This is equivalent to the
existence of a real orthogonal transformation $T$, s.t., \begin{enumerate}
    \item $Tf =  a\tent{e_1}{3} + b\tent{e_2}{3} + c\tent{e_3}{3}$ for some $a,b,c \in \mathbb{R}$.
    \item $cTf = \epsilon(\tent{\beta_0}{3} + \tent{\overline{\beta_0}}{3}) + \lambda\tent{e_3}{3}$ where $\beta_0 = (1,i,0)^T$, $\epsilon\in\{0,1\}$, and for some $c,\lambda \in \mathbb{R} $ and $c \ne 0$.
    
\end{enumerate}
\end{theorem}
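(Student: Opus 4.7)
The plan is to derive Theorem~\ref{co1} by specializing the complex dichotomy of \cite{CaiLX13} (namely Theorems 3.1 and 3.2 there) to the real case, using Lemmas~\ref{rt1},~\ref{rt2} and~\ref{rt3} as the three workhorses, one per tractable family, and then reconciling the ``free real unary'' vs.\ ``free complex unary'' issue flagged in the footnote via Lemma~\ref{star}.

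First, the \#P-hardness direction. Let $f$ be real-valued but \emph{not} in either of the two tractable forms of the theorem. I claim $f$ is also not in any of the three tractable forms of \cite{CaiLX13} when viewed as a complex signature: indeed, if it were in the first complex form, Lemma~\ref{rt1} would put it in the real form (1); if in the second, Lemma~\ref{rt2} would put it in the real form (2); and if in the third, Lemma~\ref{rt3} would collapse it to $f = \tent{\alpha}{3}$ with $\alpha \in \mathbb{R}^3$, which is a degenerate instance of (1) (take $\beta=\gamma=0$). Hence $\Holant^*_{\mathbb{C}}(f)$ is \#P-hard. Since the real-unary version $\Holant^*(f)$ differs from the complex-unary version only in which unaries come free, Lemma~\ref{star} bridges this gap and delivers \#P-hardness of $\Holant^*(f)$.

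Next, the tractability direction. If $f$ has form (1) or (2), I first show the stated ``equivalent'' canonical forms under a real orthogonal transformation $T$, and then invoke the complex $\Holant^*$ tractability from \cite{CaiLX13}. For form (1), the three real vectors $\alpha,\beta,\gamma$ are pairwise orthogonal in $\mathbb{R}^3$, so after normalization a single real orthogonal $T$ aligns them with $e_1,e_2,e_3$, giving $Tf = a\tent{e_1}{3}+b\tent{e_2}{3}+c\tent{e_3}{3}$. For form (2), the vector $\alpha\in\mathbb{R}^3$ is real and each $\beta$ is isotropic with $\la\alpha,\beta\ra=0$; Lemma~\ref{ct0} rotates $\beta$ to a multiple of $\beta_0=(1,i,0)^T$ by a real orthogonal $T$, and the orthogonality $\la\alpha,\beta\ra=0$ forces $T\alpha$ to lie in the $e_3$ direction, giving the canonical form $cTf = \epsilon(\tent{\beta_0}{3}+\tent{\overline{\beta_0}}{3})+\lambda\tent{e_3}{3}$. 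The converse direction, from canonical form back to the explicit decomposition, is a direct rewriting. Because orthogonal transformations preserve binary $=_2$, tractability of $\Holant^*(Tf)$ under a real orthogonal $T$ is equivalent to tractability of $\Holant^*(f)$, and in the canonical forms the tractability is immediate from (and in fact is a well-known special case of) the complex dichotomy.

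The step I expect to be the main obstacle is \emph{not} any single reduction but rather making sure the reconciliation is tight: one must verify that no real-valued $f$ in an apparently ``complex-only'' tractable family escapes the two real forms, and in particular that the third complex family (Lemma~\ref{rt3}) genuinely collapses when $f$ is real—this is the content of Lemma~\ref{rt3}'s derivation of a contradiction unless $f = \lambda\tent{e_3}{3}$ up to a real orthogonal change of basis. A secondary subtlety is the free-unary mismatch: care is needed because, over $\mathbb{C}$, isotropic unaries such as $(1,i,0)^T$ are available for free in $\Holant^*_{\mathbb{C}}(f)$ but not in $\Holant^*(f)$, and Lemma~\ref{star} is what permits me to transfer hardness without losing this distinction.
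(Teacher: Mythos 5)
Your proposal is correct and follows essentially the same route as the paper: the published proof is a one-liner citing Theorems 3.1 and 3.2 of \cite{CaiLX13} together with Lemmas~\ref{rt1}, \ref{rt2}, and \ref{rt3}, and what you have written is exactly the fleshed-out version of that — use the contrapositives of the three lemmas to show a real $f$ outside forms (1)--(2) misses all three complex tractable families (noting that the third collapses to a degenerate instance of form (1)), invoke the complex dichotomy for hardness, and derive the canonical $T$-forms for tractability. Your additional explicit appeal to Lemma~\ref{star} to reconcile free real versus free complex unaries is a sound refinement of a point the paper handles only in a footnote.
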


\begin{proof}
    This follows from Theorems 3.1 and 3.2 in \cite{CaiLX13} and Lemmas~\ref{rt1}, \ref{rt2}, and~\ref{rt3}.
\end{proof}

Theorem~\ref{co1} is the adapted real dichotomy for $\Holant^*(f)$ with any real-valued signature $f$ 
of arity 3 on domain 3. 
We showed that in the real case,
we can take a real orthogonal transformation to the corresponding canonical forms. 
Also the third tractable case in the scenario of complex dichotomy does not exist anymore in the real case, which will simplify and expedite the analysis of further exploration of real dichotomies on domain 4.

\section{$\Holant^*(f)$   dichotomy for $\{0,1\}$-valued $f$ on domain 4}

A dichotomy for $\Holant^*(f)$ on domain 2 has been known~\cite{Fibo}.
With the real dichotomy for $\Holant^*(f)$ on domain 3, we 
will now explore the situation on domain 4. 
In this section we give a dichotomy for a single $\{0,1\}$-valued arity-3 signature $f$ taking input values on domain 4.


\begin{theorem} \label{d4tra}
Let $f$ be a \emph{real} symmetric  ternary function defined on a domain of size 4.
If there is a real orthogonal transformation $T$ such that $Tf$ has one of the following forms, then $\Holant^*(f)$   is P-time computable, where $\beta_0= (1,i,0,0)^T$ and $\beta_1= (0,0,1,i)^T$. 
\begin{enumerate}
    \item For some $a,b,c,d \in \mathbb{R}$, 
    $ Tf = ae_1^{\otimes 3} + be_2^{\otimes 3} + ce_3^{\otimes 3} + de_4^{\otimes 3} $.
    
    \item For some  $c, \lambda_1, \lambda_2 \in \mathbb{R}$,  and $c\ne 0$,
    $ cTf = \tent{\beta_0}{3} + \tent{\overline{\beta_0}}{3} + \lambda_1 \tent{e_3}{3} + \lambda_2 \tent{e_4}{3} $. 

    \item For some $\lambda_1, \lambda_2 \in \mb{R}$, 
    $ Tf = \lambda_1 (\tent{\beta_0}{3} + \tent{\overline{\beta_0}}{3}) + \lambda_2(\tent{\beta_1}{3} + \tent{\overline{\beta_1}}{3}) $.
\end{enumerate}
\end{theorem}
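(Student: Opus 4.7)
The plan is to exploit Valiant's Holant Theorem together with the invariance of the binary equality under orthogonal transformations, so that $\Holant^*(f)\equiv_T\Holant^*(Tf)$, and it suffices to show tractability for $Tf$ in each canonical form. The uniform feature across all three cases is a \emph{block structure}: the domain partitions into disjoint blocks so that $Tf(x,y,z)$ vanishes unless $x,y,z$ all lie in a single block. The blocks are the singletons $\{1\},\{2\},\{3\},\{4\}$ in Case~1 (only diagonal triples are supported); the parts $\{1,2\},\{3\},\{4\}$ in Case~2, because $\beta_0$ is supported in coordinates $1,2$ while $e_3,e_4$ occupy the other coordinates; and the pair $\{1,2\},\{3,4\}$ in Case~3, because $\beta_0,\overline{\beta_0}$ live in coordinates $1,2$ and $\beta_1,\overline{\beta_1}$ in coordinates $3,4$. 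Consequently, in any signature grid $\Omega=(G,\pi)$, a nonzero term in the Holant sum must assign the edges of each connected component of $G$ values from a single common block.

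I would compute $\Holant_\Omega(Tf)$ by enumerating, for each connected component $C$ of $G$, which block $B$ its edges take values from; the total sum factors as a product over components. Within a component constrained to a block $B$, the restricted signature $(Tf)|_B$ together with block-restricted unary signatures forms a $\Holant^*$ instance on the smaller domain $B$. For $|B|=1$ the subproblem is a trivial scalar product over vertices together with unary contributions. For $|B|=2$, which arises in Cases~2 and~3, the restricted symmetric ternary signature is, up to a nonzero real scalar, the Boolean signature $\beta^{\otimes 3}+\overline{\beta}^{\otimes 3}$ with $\beta=(1,i)^T\in\mathbb{C}^2$; as a symmetric signature this is $[2,0,-2,0]$, which lies in the tractable class of the Boolean $\Holant^*$ dichotomy~\cite{Fibo}. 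Hence each component's contribution is computable in polynomial time, and summing the product over the finitely many block assignments per component yields a polynomial-time algorithm.

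The main subtlety is verifying that the reduction faithfully respects the $\Holant^*$ promise: that arbitrary real unary signatures on the original domain $4$ specialize to free unary signatures on each block. This holds because any real unary $u\in\mathbb{R}^4$ decomposes as a direct sum of its block-restrictions, and since edges in a block-$B$ component only take values in $B$, the contribution of $u$ at such an edge depends only on $u|_B$, which ranges freely over $\mathbb{R}^{|B|}$ as $u$ ranges over $\mathbb{R}^4$. A second mild technicality, handled in the spirit of Theorem~\ref{co1}, is that although the decompositions in Cases~2 and~3 involve complex vectors $\beta_0,\overline{\beta_0}$ and $\beta_1,\overline{\beta_1}$, their restrictions to each $2$-element block give genuine real-valued symmetric Boolean signatures lying in the tractable class. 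Combining the per-block tractability with Valiant's theorem for the orthogonal $T$ establishes that $\Holant^*(f)$ is P-time computable in all three cases.
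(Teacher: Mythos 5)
Your proof is correct, and the block structure is indeed the crucial observation: in each of the three canonical forms, the support of $Tf$ is contained in $B_1^3\cup\cdots\cup B_k^3$ for a partition $\{B_1,\ldots,B_k\}$ of the domain, which forces every $f$-vertex in a connected component to use a common block (unary vertices have degree one, so the $f$-subgraph of a connected component stays connected), and the Holant sum factors accordingly. You correctly compute the restriction of $\beta_0^{\otimes 3}+\overline{\beta_0}^{\otimes 3}$ to a size-2 block as $[2,0,-2,0]$, which satisfies the second-order recurrence $f_{k+2}=-f_k$ (so $a^2+4b=-4\ne 0$) and hence lies in the tractable class of the Boolean $\Holant^*$ dichotomy; singleton blocks are trivial. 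The paper's own proof is omitted beyond the phrase ``by a holographic transformation,'' which most plausibly means transforming each 2-dimensional block by a complex matrix sending $\beta_0,\overline{\beta_0}$ to $e_1,e_2$ so that $Tf$ becomes a weighted generalized equality. Your version achieves the same end by an explicit graph-theoretic decomposition followed by an appeal to the domain-2 dichotomy, which is equally valid and arguably more self-contained, at the cost of treating each block's subproblem as a black-box tractable instance rather than exhibiting an explicit polynomial-time evaluation. One small imprecision to note: the claim that a nonzero term ``must assign the edges of each connected component of $G$ values from a single common block'' requires the component to contain at least one $f$-vertex; a component that is a single edge joining two unary vertices is unconstrained, but it contributes the easily computed scalar $\sum_{e}u_1(e)u_2(e)$ and does not affect tractability.
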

\begin{proof}
The proof is by a holographic transformation.
We omit the details.
\end{proof}

\begin{theorem} \label{main}
Let $f$ be a $\{0,1\}$-valued symmetric ternary function 
defined on a domain of size 4.
If $f$ is not among the P-time computable cases in  Theorem~\ref{d4tra}, then the problem $\Holant^*(f)$ is \#P-hard.
Moreover, for $\{0,1\}$-valued $f$,
only cases 1 and 2 in Theorem~\ref{d4tra}
occur.
%
%
%
%
%
\end{theorem}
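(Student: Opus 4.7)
The statement has two parts: (i) for $\{0,1\}$-valued $f$, Case 3 of Theorem~\ref{d4tra} cannot occur except as a degeneration into Cases 1 or 2, and (ii) every $\{0,1\}$-valued $f$ lying outside Cases 1 and 2 yields a $\numP$-hard $\Holant^*(f)$.

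For (i), I would suppose $Tf = \lambda_1 (\tent{\beta_0}{3} + \tent{\overline{\beta_0}}{3}) + \lambda_2 (\tent{\beta_1}{3} + \tent{\overline{\beta_1}}{3})$ for some real orthogonal $T$ with $\lambda_1 \lambda_2 \neq 0$, and let $w_1, \ldots, w_4$ denote the rows of $T$, which form an orthonormal basis of $\mb{R}^4$. Applying $T^{-1} = T^T$ and expanding via the identity $\tent{(a+ib)}{3} + \tent{(a-ib)}{3} = 2\tent{a}{3} - 2(a \otimes \tent{b}{2} + b \otimes a \otimes b + \tent{b}{2} \otimes a)$ for real $a, b$, one obtains $f$ as an explicit real symmetric ternary tensor in $w_1, w_2, w_3, w_4$ whose coefficients depend linearly on $\lambda_1, \lambda_2$. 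Each of the 20 symmetric entries $f_{a_1,a_2,a_3,a_4}$ thus becomes an explicit polynomial expression. Imposing that all these entries lie in $\{0,1\}$, together with the orthonormality of $w_1, \ldots, w_4$, yields a finite system. A short case analysis (organized by how many of the $w_i$ are axis-aligned) shows that at least one of $\lambda_1, \lambda_2$ must vanish, at which point the form collapses into Case 2.

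For (ii), the strategy is to reduce each non-tractable $\{0,1\}$-valued $f$ to a $\numP$-hard problem on domain 3 and then invoke the real dichotomy of Theorem~\ref{co1}. There are only finitely many $\{0,1\}$-valued symmetric ternary signatures on domain 4, and the $S_4$-action permuting the four colors partitions them into a manageable number of orbits. For each orbit outside Cases 1--2, the basic tool is a restriction-to-subdomain gadget: choose a real orthogonal $T$ so that after the holographic transformation $f \mapsto Tf$, restricting to a three-element subset $S \subset \{1,2,3,4\}$ yields a real symmetric ternary signature $g = (Tf)|_S$ on domain 3 that lies outside the two tractable forms of Theorem~\ref{co1}. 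The restriction is implemented in $\Holant^*$ by attaching to each dangling edge the freely available indicator unary $\chi_S$ (in the transformed frame). Once $g$ is $\numP$-hard on domain 3, the hardness pulls back to $\Holant^*(f)$ through the holographic equivalence.

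The principal obstacle is that for certain orbits, every axis-aligned three-subset restriction of $Tf$, for the ``obvious'' choices of $T$, happens to land in one of the tractable forms of Theorem~\ref{co1}. For such $f$ one has to either rotate the restriction onto a three-dimensional subspace of $\mb{R}^4$ that is not axis-aligned, or construct a more elaborate gadget combining several copies of $f$ with free unaries (via polynomial interpolation) to realize an effective rank-3 projector. These are the ``several different constructions'' alluded to in Subsection~\ref{appro}, and verifying that at least one such construction succeeds for every non-tractable $\{0,1\}$-valued $f$ constitutes the bulk of the argument.
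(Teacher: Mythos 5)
Your part (ii) contains a genuine gap in the mechanism used to implement the subdomain restriction. You write that ``the restriction is implemented in $\Holant^*$ by attaching to each dangling edge the freely available indicator unary $\chi_S$,'' but this does not work: in a Holant instance the edges are shared between pairs of $f$-vertices, so a unary can only be placed at a degree-one leaf and cannot constrain an interior edge variable. To force every edge to take values in a three-element set $S$ one must place a \emph{binary} function $={}_{G,B,W}$ (the rank-3 diagonal idempotent) on each edge by subdividing it, and that binary function is \emph{not} free in $\Holant^*$. This is exactly why the paper devotes Lemmas~\ref{eq3}, \ref{eq3_2}, \ref{eq3_3}, \ref{eq3_4} to interpolating $={}_{G,B,W}$ and $={}_{B,W}$ from binary gadgets built out of $f$ and unaries, and why the paper's strategies 1, 2, 5 all begin by constructing a symmetric binary signature from $f$ and examining its eigenvalues. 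Your remark about ``more elaborate gadgets via polynomial interpolation'' gestures at this, but you misdiagnose when it is needed: interpolation is the \emph{default} mechanism, not a fallback for hard orbits, and what you call the easy case does not exist as stated.

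The paper's other two strategies also differ from what you describe. Strategy~3 looks for a unary $u$ with $\la f,u\ra = 0$; after an orthogonal transformation sending $u$ to $e_1$, the signature $Tf$ is \emph{already} supported on a three-element subdomain, so no restriction gadget is needed at all. Strategy~4 looks for $u$ with $\la f,u\ra = c\,uu^T$, producing a ``domain-separated'' signature. Neither of these is the same as ``restricting via $\chi_S$.'' You also omit that some cases drop to domain 2 (the Boolean dichotomy) via a rank-2 binary and Lemma~\ref{eq3_2}, not always to domain 3. On part (i), the paper does not give the explicit tensor-expansion argument you sketch — it obtains the non-occurrence of Case~3 as a byproduct of the same exhaustive case analysis — but your proposed direct algebraic route could in principle work if carried out; as written it is only a plan. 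The structural fix you need is to replace the free-unary restriction idea with the interpolated binary restricted equalities (or the ``already lower-domain after $T$'' observation), which is the real technical content of the paper's hardness argument.
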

We remark that for $\{-1, 1\}$-valued
 symmetric ternary functions, the
 third tractable case of Theorem~\ref{d4tra} does occur. 
The following is an example:

$\vcenter{\hbox{\begin{minipage}{3.5cm}
\centering
\includegraphics[width=4cm,height=4cm]{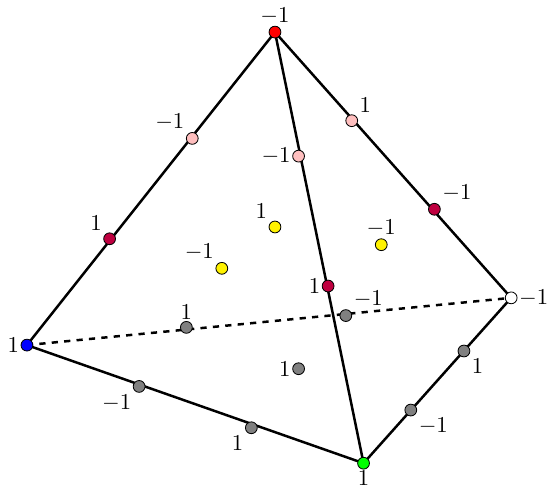}
\end{minipage}}}
\qquad
Q = \left[\begin{matrix}\frac{1 + \sqrt{2}}{2 \sqrt{\sqrt{2} + 2}} & - \frac{1}{2 \sqrt{\sqrt{2} + 2}} & \frac{- \sqrt{2} - 1}{2 \sqrt{\sqrt{2} + 2}} & \frac{1}{2 \sqrt{\sqrt{2} + 2}}\\\frac{-1 + \sqrt{2}}{2 \sqrt{2 - \sqrt{2}}} & \frac{1}{2 \sqrt{2 - \sqrt{2}}} & \frac{-1 + \sqrt{2}}{2 \sqrt{2 - \sqrt{2}}} & \frac{1}{2 \sqrt{2 - \sqrt{2}}}\\\frac{-3 + 2 \sqrt{2}}{2 \sqrt{10 - 7 \sqrt{2}}} & \frac{1 - \sqrt{2}}{2 \sqrt{10 - 7 \sqrt{2}}} & \frac{3 - 2 \sqrt{2}}{2 \sqrt{10 - 7 \sqrt{2}}} & \frac{-1 + \sqrt{2}}{2 \sqrt{10 - 7 \sqrt{2}}}\\\frac{-3 - 2 \sqrt{2}}{2 \sqrt{7 \sqrt{2} + 10}} & \frac{1 + \sqrt{2}}{2 \sqrt{7 \sqrt{2} + 10}} & \frac{-3 - 2 \sqrt{2}}{2 \sqrt{7 \sqrt{2} + 10}} & \frac{1 + \sqrt{2}}{2 \sqrt{7 \sqrt{2} + 10}}\end{matrix}\right]
$

On the left is a signature $g$. On the right is an orthogonal matrix $Q$. 
In fact, under the transformation, 
$Qg  = \sqrt{2-\sqrt{2}} (\tent{\beta_0}{3} + \tent{\overline{\beta_0}}{3}) - \sqrt{2+\sqrt{2}}(\tent{\beta_1}{3} + \tent{\overline{\beta_1}}{3})$, which is one example of the third tractable case of Theorem~\ref{d4tra}.

There are only a finite  (albeit a large) number of $\{0, 1\}$-valued
 symmetric ternary signatures on domain 4. 
We will prove Theorem~\ref{main}
by going through all signatures
using five general strategies.
When one signature could not be identified as \numP-hard by any of the five strategies in section~\ref{appro}, it is shown that it actually satisfies the first or second  tractable conditions in Theorem~\ref{d4tra}.

\subsection{Strategies} \label{appro}

There are  five different strategies we use to identify  \#P-hard signatures.

\begin{enumerate}
    \item Use gadgets to form a binary symmetric signature which when written as a matrix $M$ has rank 2.  Then apply an 
    orthogonal holographic transformation  $T$, which transforms $M$ to the form $\left[ \begin{smallmatrix} 0 & 0 & 0 & 0 \\ 0 & 0 & 0 & 0\\0 & 0 & \lambda & 0\\0 & 0 & 0 & \mu \end{smallmatrix}\right]$, 
    $\lambda\mu \ne 0$. We then interpolate ${\rm diag}(0,0,1,1)$, a Boolean equality on the last two (new) domain
    elements  by Lemma~\ref{eq3_2}.   Finally apply the Boolean domain dichotomy.

    \item Similarly, we form a binary symmetric signature which when written as a matrix has rank 3.  Then apply an orthogonal transformation $T$ and get $\left[ \begin{smallmatrix} 0 & 0 & 0 & 0 \\ 0 & \lambda_1 & 0 & 0\\0 & 0 & \lambda_2 & 0\\0 & 0 & 0 & \lambda_3 \end{smallmatrix}\right]$, 
    $\lambda_1 \lambda_2 \lambda_3 \ne 0$, and interpolate ${\rm diag}(0,1,1,1)$ by Lemma~\ref{eq3}. 
    Finally, apply 
    Theorem~\ref{co1} to $Tf$ on the last three (new) domain
    elements.

    \item Find  a nonzero unary signature $u \in \mathbb{R}^4$, such that   $\la f, u\ra  =0$. Then define an orthogonal matrix $T$ with  (normalized) $u$ as the first row.  
    $T$  transforms $f$ to a 
    signature supported on a lower domain (all 0's except the bottom face of the signature tetrahedron). Then apply the corresponding dichotomy.
    
    \item Find some nonzero unary signature $u \in \mathbb{R}^4$, 
    and nonzero $c \in \mathbb{R}$,
     such that $\la f, u \ra  = c u \cdot u^T$.  Define an orthogonal $T$ using (normalized) $u$  to be the first row. 
    $T$ will transform $f$ to be domain separated (where the first new domain element $R'$ is separated from the rest $\{G', B', W' \}$, i.e.,
    $Tf$ evaluates to 0, when $R'$ is among its input, \emph{except} possibly on $(R', R', R')$).  Then apply the domain 3 dichotomy Theorem~\ref{co1}. 
    
    \item Use gadgets to construct a symmetric binary  signature  $M$
     which when written as a
matrix has rank 4, and its 4 eigenvalues $\lambda_1, \lambda_2, \lambda_3, \lambda_4$ satisfy some condition under which we can interpolate (by Lemma~\ref{eq3_3} and Lemma~\ref{eq3_4}) either ${\rm diag}(0,0,1,1)$ or ${\rm diag}(0,1,1,1)$ from ${\rm diag}(\lambda_1, \lambda_2, \lambda_3, \lambda_4)$. 
    Form the orthogonal matrix $Q$ such that $Q M Q^T = {\rm diag}(\lambda_1, \lambda_2, \lambda_3, \lambda_4)$.
    Finally apply the corresponding lower domain dichotomy to the corresponding part of $Q f$.
\end{enumerate}
 
Below we show several 
examples using some of the  strategies above. 

Consider the tetrahedron  on the left. We use strategy  5:

$\vcenter{\hbox{\begin{minipage}{3.5cm}
\centering
\includegraphics[width=4cm,height=4cm]{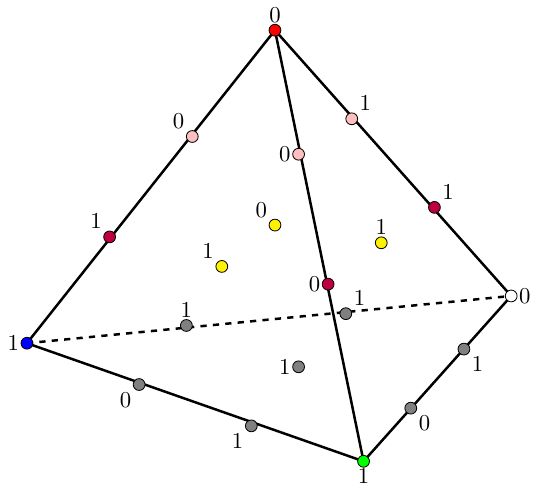}
\end{minipage}}}$
\qquad
\begin{tabular}{c c c c c c c}
     &&& $-\frac{41}{24\sqrt{3}}$ &&&  \\
     && $\frac{1}{4\sqrt{2}}$ && $-\frac{1}{12\sqrt{6}}$ &&\\
   & $\frac{2}{\sqrt{3}}$ && $0$ && $-\frac{5}{3\sqrt{3}}$ & \\
   $-\frac{2}{\sqrt{2}}$ && $\frac{\sqrt{2}}{\sqrt{3}}$ && $\frac{1}{\sqrt{2}}$ && $-\frac{\sqrt{2}}{3\sqrt{3}}$ \end{tabular}    

\noindent We call this signature $g$, and use a unary $e_4 = (0,0,0,1)^T$ to connect to $g$ to produce a binary symmetric function $M = \left[\begin{smallmatrix}1 & 1 & 0 & 1\\1 & 0 & 1 & 1\\0 & 1 & 1 & 1\\1 & 1 & 1 & 0\end{smallmatrix}\right]$ whose eigenvalues are $\{3,1,-1,-1\}$. 
We then use an orthogonal matrix $Q=\frac{1}{2\sqrt{3}}\left[\begin{smallmatrix}\sqrt{3} & \sqrt{3} & \sqrt{3} & \sqrt{3}\\-1 & -1 & -1 & 3\\- \sqrt{6} & 0 & \sqrt{6} & 0\\\sqrt{2} & - 2 \sqrt{2} & \sqrt{2} & 0\end{smallmatrix}\right]$ to diagonalize $M$, i.e., $QMQ^T = {\rm diag} (3,-1,1,-1)$. It follows that we are able to interpolate ${\rm diag}(0,1,1,1)$ (by Lemma~\ref{eq3_3}). We build a new signature grid where we add a  binary vertex 
on each edge between two $Qg$'s, and assign ${\rm diag}(0,1,1,1)$ on 
all the new degree 2 vertices.
This restricts all edges in the
new signature grid to be assigned a color only from $\{G',B',W'\}$ (no $R'$) in order the evaluation of any product term in the partition function to be nonzero. 
It follows that we have a  problem on domain size 3, which is 
defined by the ternary signature on domain 3 
shown on the right as a ``triangle". 
We can apply Theorem~\ref{co1}, and find that it is \numP-hard. Therefore, the problem
$\Holant^*(g)$  
is \numP-hard.

Next we show another example using strategy number 3 above.

$\vcenter{\hbox{\begin{minipage}{3.5cm}
\centering
\includegraphics[width=4cm,height=4cm]{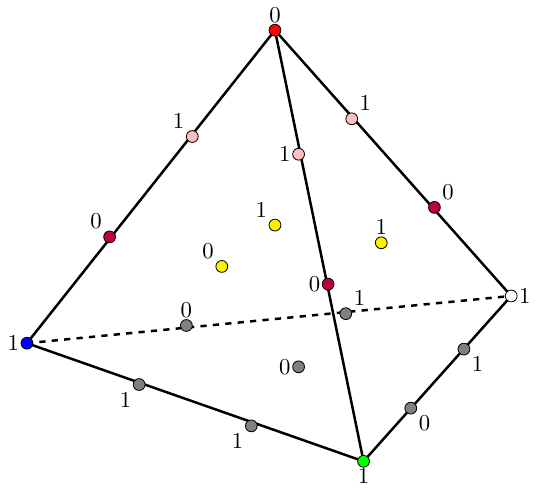}
\end{minipage}}}$
\qquad\qquad
$\vcenter{\hbox{\begin{minipage}{3.5cm}
\centering
\includegraphics[width=4cm,height=4cm]{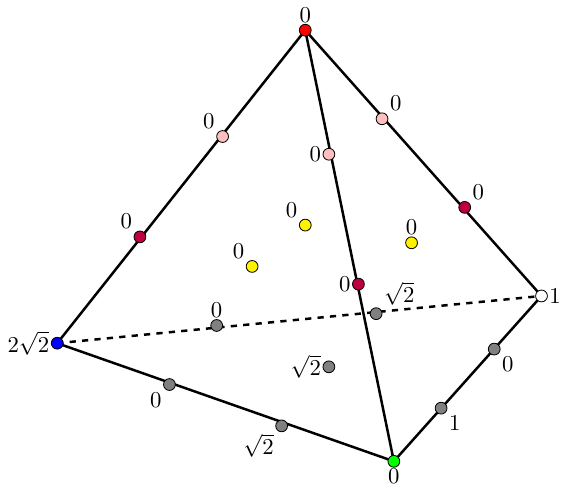}
\end{minipage}}}$

\noindent Let $g$ be the signature on  the left. Connecting a unary $u = (0,1,-1,0)^T$ to it, we get a zero binary function. We  construct an orthogonal matrix $Q$ with
 the normalized $u$ as its  first row,
$Q = \left[\begin{smallmatrix}0 & \frac{1}{\sqrt{2}} & - \frac{1}{\sqrt{2}} & 0\\1 & 0 & 0 & 0\\0 & \frac{1}{\sqrt{2}} & \frac{1}{\sqrt{2}} & 0\\0 & 0 & 0 & 1\end{smallmatrix}\right]$.
Then the transformed signature $Qg$ is shown as the tetrahedron on the right,  which contains only nonzero elements on its bottom face. 
This     signature is effectively a ternary signature  on domain size 3,
and  the original problem is equivalent to this problem on domain size 3.
Then we apply the domain 3 dichotomy.
By Theorem~\ref{co1}, 
the problem is \numP-hard.

Consider another example:

$\vcenter{\hbox{\begin{minipage}{3.5cm}
\centering
\includegraphics[width=4cm,height=4cm]{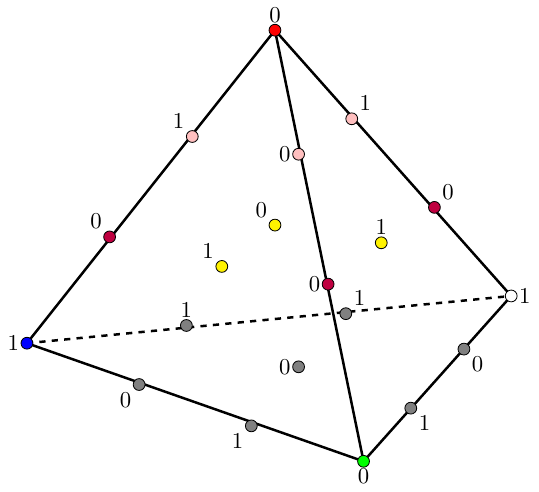}
\end{minipage}}}$
\qquad\qquad
$\vcenter{\hbox{\begin{minipage}{3.5cm}
\centering
\includegraphics[width=4cm,height=4cm]{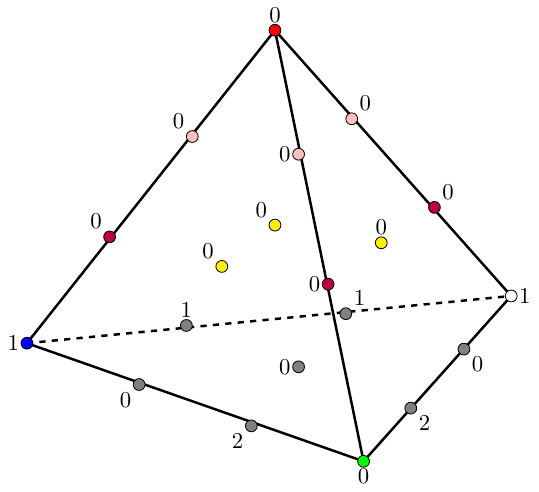}
\end{minipage}}}$

\noindent On the left is 
$g$,
a ternary signature on domain size 4.
There is a unary $u = (1,-1,0,0)^T$ such that $\la u,g\ra = 0$.
Hence, we  construct an orthogonal $Q=\left[\begin{smallmatrix}\frac{1}{\sqrt{2}} & - \frac{1}{\sqrt{2}} & 0 & 0\\\frac{1}{\sqrt{2}} & \frac{1}{\sqrt{2}} & 0 & 0\\0 & 0 & 1 & 0\\0 & 0 & 0 & 1\end{smallmatrix}\right]$ (where its first row is the normalized $u$), such that under the transformation, $g$ becomes $Qg$ (on the right) and 
$Qg=\tent{\alpha}{3} + \tent{\beta}{3}$
 where $\alpha = (0, -2^{\frac{1}{6}}, 2^{-\frac{1}{3}}, 2^{-\frac{1}{3}})^T$ and $\beta = (0, 2^{\frac{1}{6}}, 2^{-\frac{1}{3}}, 2^{-\frac{1}{3}})^T$. Here, after we use strategy number 3 to transform the problem to one on domain 3, we find it is tractable.

For all $\{0,1\}$-valued ternary signatures on domain size 4,  we went through them  using one of the above 5 strategies. It is found that we either can prove it
is \#P-hard, or when it fails to do so, 
it is tractable by being in 
one of the tractable forms in Theorem~\ref{main}.


\noindent
\paragraph{\bf Tricks in applying the domain 3 dichotomy} 
%
To apply the domain 3 dichotomy, there are also several tricks that can help simplify the calculation. 
\begin{enumerate}
    \item When checking whether a real ternary  domain 3 signature satisfies  the first tractable form $f= \tent{\alpha}{3} + \tent{\beta}{3} + \tent{\gamma}{3}$, we can use different unary signatures $u$ to connect to it and get different binary functions. Written as a matrix form $M = \la f, u\ra$, it's obviously symmetric.  
    Write its eigen-decomposition as $M = Q\Lambda Q^T$, where $Q$ is an orthogonal matrix $Q = [q_1,q_2, q_3]$ where $q_i$ ($i=1,2,3$) are column vectors,  and $\Lambda = {\rm diag}(\lambda_1, \lambda_2, \lambda_3)$. 
    If $\lambda_1, \lambda_2, \lambda_3$ are all distinct (at most one of them can be 0), and if $f$ falls into the first tractable case, then the set $\{\pm q_1, \pm q_2, \pm q_3\}$ is independent of $u$. 
    \begin{proof}
        If $f$ satisfies  the first tractable form $f= \tent{\alpha}{3} + \tent{\beta}{3} + \tent{\gamma}{3}$, then $M = \la u, f \ra = \la u, \alpha\ra \alpha \alpha^T + \la u, \beta\ra \beta \beta^T + \la u, \gamma \ra \gamma \gamma^T$. 
        Then $M \alpha = \la u, \alpha\ra \la \alpha, \alpha \ra \alpha$, so $\alpha$ is an eigenvector of $M$ (if nonzero). Similarly, nonzero $\beta, \gamma$ are both eigenvectors of $M$. 
        As ${\rm rank\,} M \ge 2$,
        at most one of  $\alpha$,  $\beta, \gamma$ can be 0.
        So at least two of them are
        scalar multiples of 
        $q_i$ $( i=1,2,3)$.
        So, at least two of
          $\pm q_i$ $( i=1,2,3)$
          do not depend on the choice  of $u$,
          and the third is uniquely
          determined by those two
          up to a $\pm$ factor.
     %
    %
    \end{proof}

    \item When checking whether a real ternary domain 3 signature is in the first tractability case, we can search for whether there exists a unary $u$ such that $\la u, f\ra = \la u, u \ra u u^T$. If a nonzero $f$  does not have  such  a nonzero unary, it cannot be in the first tractable case since  $\alpha, \beta, \gamma$  satisfy this relation.

    \item To check whether a real ternary domain 3 signature is in the second tractability case,  assume  it has  the form
    $f = \tent{\alpha}{3} + \tent{\beta}{3} + \tent{\overline{\beta}}{3}$. 
    If $\beta=0$, then it also falls into the first tractability category. 
    Now we assume $\beta \ne 0$. 
    Then we have a nonzero $\beta$ such that $\la \beta, f \ra = \la \beta, \overline{\beta} \ra \overline{\beta}\ \overline{\beta}^T$ and $\la \beta , \beta \ra =0$.
    So if there does not exist such a nonzero isotropic $\beta$, it is not in the second tractable case. 
\end{enumerate}

\subsection{Interpolate restricted equalities}
Suppose for some unary $u$, 
the 4-by-4 matrix $A = \innerp{u}{f}$ has   rank 3. 
Because $A$ is a real symmetric matrix, we can construct an orthogonal matrix $T$ 
so that
\begin{equation} \label{d4e3}
    TAT^T = \left[ \begin{smallmatrix} 0 & 0 & 0 & 0 \\ 0 & \lambda_1 & 0 & 0 \\ 0& 0 & \lambda_2 & 0 \\ 0 & 0 & 0 & \lambda_3 \end{smallmatrix}\right] 
\end{equation}
where $\lambda_1 \lambda_2 \lambda_3 \ne 0$.
We denote by $=_{G,B,W}$
the binary function in (\ref{d4e3})
if all $\lambda_i =1$.


\begin{lemma} \label{eq3}
Let $H:\{R,G,B,W\}^2 \to \mathbb{R}$ be a rank 3 binary function of the form (\ref{d4e3}). Then for any $\mathcal{F}$ containing $H$, we have $$ \Holant(\mf{F} \cup \{=_{G,B,W}\}) \le_T \Holant(\mf{F}).$$
\end{lemma}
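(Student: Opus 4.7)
The plan is standard polynomial interpolation using gadgets built from $H$. Since $H$ is already in the diagonal form ${\rm diag}(0,\lambda_1,\lambda_2,\lambda_3)$ with all $\lambda_i\ne 0$, concatenating $k$ copies of $H$ in a path (iteratively identifying pairs of variables and summing over the shared one) yields a binary function whose matrix is exactly $H^k = {\rm diag}(0,\lambda_1^k,\lambda_2^k,\lambda_3^k)$, so the ``$k$-th power'' of $H$ is realizable as a single gadget for every $k\ge 1$.

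Given an instance $\Omega$ of $\Holant(\mathcal{F}\cup\{=_{G,B,W}\})$ containing $n$ copies of $=_{G,B,W}$, let $\Omega_k$ be obtained by replacing each $=_{G,B,W}$-vertex with the $H^k$-gadget. Any edge-assignment contributing nonzero weight to $\Holant_{\Omega_k}$ must color the two edges incident to every replaced vertex with a common color from $\{G,B,W\}$, since the $R$-entry of $H$ vanishes. Stratifying by the triple $(a,b,c)$ counting how many of the $n$ replaced vertices see the colors $G$, $B$, $W$ respectively, one obtains
\[
\Holant_{\Omega_k} \;=\; \sum_{\substack{a,b,c\ge 0\\ a+b+c=n}} c_{a,b,c}\,\lambda_1^{ka}\lambda_2^{kb}\lambda_3^{kc},
\]
where the coefficients $c_{a,b,c}$ are independent of $k$ and satisfy $\Holant_\Omega = \sum_{a+b+c=n} c_{a,b,c}$, which is the quantity we need to compute.

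Set $\mu_{a,b,c} = \lambda_1^a\lambda_2^b\lambda_3^c$, let $\mu_1,\ldots,\mu_m$ be the \emph{distinct} values taken (all nonzero, with $m \le \binom{n+2}{2}$), and let $S_i$ be the sum of the $c_{a,b,c}$ over the preimage of $\mu_i$. Then $\Holant_{\Omega_k} = \sum_{i=1}^{m} S_i\,\mu_i^k$, so querying the $\Holant(\mathcal{F})$-oracle on the polynomial-size instances $\Omega_1,\ldots,\Omega_m$ yields a linear system in the $S_i$ with coefficient matrix $(\mu_i^k)_{1\le k,i\le m}$. This is a Vandermonde-type matrix with distinct nonzero nodes, hence nonsingular, so one recovers all $S_i$ and outputs $\Holant_\Omega = \sum_i S_i$ in polynomial time. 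The only technical subtlety is the possibility of coincidences among the products $\lambda_1^a\lambda_2^b\lambda_3^c$, which would make the naive matrix indexed by $(a,b,c)$ singular; grouping into equivalence classes of equal $\mu$-value before forming the Vandermonde system sidesteps this, and the target sum $\Holant_\Omega$ is preserved since it is already the sum over all classes of the recoverable quantities $S_i$.
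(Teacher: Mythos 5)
Your proof is correct and follows essentially the same approach as the paper: replace each $=_{G,B,W}$ by a $k$-chain of $H$ to realize ${\rm diag}(0,\lambda_1^k,\lambda_2^k,\lambda_3^k)$, stratify the Holant sum by how many occurrences see each color, and interpolate via a Vandermonde system after merging columns whose monomials $\lambda_1^a\lambda_2^b\lambda_3^c$ coincide. The only cosmetic difference is that the paper normalizes by $\lambda_3^{mk}$ before forming the Vandermonde nodes, whereas you work directly with the $\mu_i^k$, which is equally valid since the $\mu_i$ are distinct and nonzero.
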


Similarly, we denote by $=_{B,W}$
the binary function in (\ref{d4e3})
if  $\lambda_1 =0$ and $\lambda_2=\lambda_3 =1$. We have 

\begin{lemma} \label{eq3_2}
Let $H:\{R,G,B,W\}^2 \to \mathbb{R}$ be a rank 2 binary function of the form $\left[ \begin{smallmatrix} 0 & 0 & 0 & 0 \\ 0 & 0 & 0 & 0 \\ 0 & 0 & \lambda & 0 \\ 0 & 0 & 0 & \mu \end{smallmatrix}\right]$ where $\lambda \mu \ne 0$. Then for any signature set $\mf{F}$ containing $H$, we have $$ \Holant(\mf{F} \cup \{=_{B,W}\}) \le_T \Holant(\mf{F}). $$

\end{lemma}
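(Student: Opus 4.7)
The plan is to interpolate $=_{B,W}$ from $H$ by a standard polynomial interpolation over gadget powers. Given a signature grid $\Omega$ over $\mf{F}\cup\{=_{B,W}\}$ with $n$ occurrences of $=_{B,W}$, I would stratify $\Holant_\Omega$ by the inputs at these vertices: since $=_{B,W}$ is nonzero only on $(B,B)$ and $(W,W)$, and contributes $1$ in either case, we may write $\Holant_\Omega = \sum_{i=0}^n Z_i$, where $Z_i$ is the total contribution of those edge colorings in which exactly $i$ of the $n$ special vertices receive $(B,B)$ and the remaining $n-i$ receive $(W,W)$.

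For each $k\ge 1$, form $\Omega_k$ from $\Omega$ by replacing every $=_{B,W}$ by a path gadget of $k$ copies of $H$ joined end to end. The effective binary signature of this gadget is $H^k = \mathrm{diag}(0,0,\lambda^k,\mu^k)$, and since $H\in\mf{F}$ the value $\Holant_{\Omega_k}$ is obtainable from the oracle for $\Holant(\mf{F})$. The same stratification yields
\[
\Holant_{\Omega_k} \;=\; \sum_{i=0}^n Z_i\,\lambda^{ki}\mu^{k(n-i)} \;=\; \mu^{kn}\sum_{i=0}^n Z_i\,(r^k)^i, \qquad r := \lambda/\mu.
\]
Hence computing $\Holant_{\Omega_k}$ for several values of $k$ gives a linear system in the unknowns $Z_0,\ldots,Z_n$.

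The main case is $r \ne \pm 1$. Then $r$ is a nonzero real that is not a root of unity, so the points $r, r^2, \ldots, r^{n+1}$ are pairwise distinct; evaluating at $k=1,\ldots,n+1$ and dividing the $k$-th equation by $\mu^{kn}$ yields a Vandermonde system with invertible coefficient matrix. Solving it in polynomial time recovers every $Z_i$, and summing gives $\Holant_\Omega$. The two degenerate cases are trivial: if $\lambda=\mu$, then $H = \lambda\cdot(=_{B,W})$ and $\Holant_{\Omega_1}=\lambda^n\Holant_\Omega$; if $\lambda=-\mu$, then $H^2 = \lambda^2\cdot(=_{B,W})$ and $\Holant_{\Omega_2}=\lambda^{2n}\Holant_\Omega$, so in each case a single division by a known nonzero scalar recovers $\Holant_\Omega$. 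The only potential obstacle is the root-of-unity degeneracy in $r$, but the reality of $\lambda,\mu$ restricts this to precisely these two easily handled cases.
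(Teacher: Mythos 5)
Your proof is correct and follows the same interpolation strategy as the paper (stratify the Holant sum by the $(B,B)$ versus $(W,W)$ counts, replace each $=_{B,W}$ by a $k$-chain of $H$ to get $\mathrm{diag}(0,0,\lambda^k,\mu^k)$, and solve the resulting Vandermonde system). The only difference is in how the degenerate eigenvalue ratios are handled: you do a case split on $r=\lambda/\mu$, using reality to argue that the only problematic ratios are $r=\pm1$ and dispatching those directly, whereas the paper's proof (adapted from Lemma~\ref{eq3}) uniformly merges any coincident Vandermonde columns; both resolutions are valid and lead to the same conclusion.
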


Lemma~\ref{eq3} and \ref{eq3_2} 
enable us to construct instances on a lower domain which can help establish  \numP-hardness.
In the next two lemmas we interpolate a lower domain equality directly from a rank 4 real symmetric matrix.

\begin{lemma} \label{eq3_3}
Let $H:\{R,G,B,W\}^2 \to \mathbb{R}$ be a rank 4 symmetric binary function. Let $Q$ be the orthogonal matrix such that $Q H Q^T = \left[ \begin{smallmatrix} \lambda_1 & 0 & 0 & 0 \\ 0 & \lambda_2 & 0 & 0 \\ 0 & 0 & \lambda_3 & 0 \\ 0 & 0 & 0 & \lambda_4 \end{smallmatrix}\right]$ where $\lambda_1 \lambda_2 \lambda_3 \lambda_4 \ne 0$. If the four eigenvalues $\lambda_1, \lambda_2, \lambda_3, \lambda_4$ further satisfy the condition: for all $s, a,b,c \in \mathbb{Z}$, $s > 0$, if  $a + b + c = s$ then  $\lambda_1^s \ne \lambda_2^a \lambda_3^b \lambda_4^c$. Then for any $\mf{F}$ containing $H$, we have $$ \Holant(Q\mf{F} \cup \{=_{G,B,W}\}) \le_T \Holant(\mf{F}) .$$

\end{lemma}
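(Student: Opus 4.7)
The plan is a polynomial-interpolation argument of standard Vandermonde flavor, analogous to the proof sketched for Lemma~\ref{eq3} but with four eigenvalues in play. The hypothesis on the $\lambda_i$ is used in exactly one place: to guarantee that the assignment-types with no $R$-endpoint at the interpolated vertices are spectrally separated from all other types.

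First, since $Q$ is real orthogonal, $\Holant(\mathcal{F}) \equiv_T \Holant(Q\mathcal{F})$, so it suffices to establish $\Holant(Q\mathcal{F}\cup\{=_{G,B,W}\}) \le_T \Holant(Q\mathcal{F})$. In $\Holant(Q\mathcal{F})$ the transformed signature $QHQ^T$ equals the diagonal $D := {\rm diag}(\lambda_1,\lambda_2,\lambda_3,\lambda_4)$, and connecting $k$ copies of $D$ in series realizes $D^k = {\rm diag}(\lambda_1^k,\lambda_2^k,\lambda_3^k,\lambda_4^k)$ for every $k\ge 1$, at no cost.

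Let $\Omega$ be an instance of $\Holant(Q\mathcal{F}\cup\{=_{G,B,W}\})$ containing $n$ occurrences of $=_{G,B,W}$, and let $\Omega_k$ be the signature grid in which each such occurrence is replaced by $D^k$. Since every such signature is diagonal, every contributing assignment colors the two edges at each replaced vertex with the same element of $\{R,G,B,W\}$. I would group these assignments by \emph{type}: a type is a quadruple $(r_1,r_2,r_3,r_4)$ with $r_1+r_2+r_3+r_4 = n$, where $r_i$ counts the replaced vertices whose two edges both receive the $i$-th domain element. Writing $c_{r_1 r_2 r_3 r_4}$ for the sum over assignments of this type of the product of the remaining signature values (a number independent of $k$), one obtains
\[Z(\Omega_k) \;=\; \sum_{r_1+r_2+r_3+r_4 = n} c_{r_1 r_2 r_3 r_4}\,\bigl(\lambda_1^{r_1}\lambda_2^{r_2}\lambda_3^{r_3}\lambda_4^{r_4}\bigr)^k,\]
while the desired quantity is $Z(\Omega) = \sum_{r_2+r_3+r_4=n} c_{0\,r_2 r_3 r_4}$. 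Enumerate the distinct nonzero values $y_1,\dots,y_m$ of $X(r_1,r_2,r_3,r_4) := \lambda_1^{r_1}\lambda_2^{r_2}\lambda_3^{r_3}\lambda_4^{r_4}$ over the $O(n^3)$ admissible types, and set $C_i := \sum_{X(r_1,r_2,r_3,r_4)=y_i} c_{r_1 r_2 r_3 r_4}$. Since the $\lambda_j$ are nonzero, each $y_i$ is nonzero, so the matrix $(y_i^k)_{1\le k,i\le m}$ is a nonsingular Vandermonde, and polynomially many oracle calls to compute $Z(\Omega_1),\dots,Z(\Omega_m)$ recover all $C_i$.

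The final step is to express $Z(\Omega)$ in terms of the recovered $C_i$'s, and this is the one place where the hypothesis is invoked; it is also the step I expect to be the main subtlety. I claim that no $y_i$ attained by a type with $r_1=0$ is also attained by any type with $r_1\ge 1$. For, if $(0,r_2,r_3,r_4)$ and $(r_1',r_2',r_3',r_4')$ with $r_1'\ge 1$ yielded the same value, cancellation and rearrangement give
\[\lambda_1^{\,r_1'} \;=\; \lambda_2^{\,r_2-r_2'}\,\lambda_3^{\,r_3-r_3'}\,\lambda_4^{\,r_4-r_4'},\]
where the three integer exponents on the right sum to $n - (n-r_1') = r_1'$; this is exactly the equality forbidden by the hypothesis with $s = r_1'$ and $(a,b,c) = (r_2-r_2',\,r_3-r_3',\,r_4-r_4')$. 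Hence we can partition the indices $\{1,\dots,m\}$ into those $i$ whose $y_i$ arises solely from types with $r_1=0$ and those whose $y_i$ arises solely from types with $r_1\ge 1$, and summing $C_i$ over the former set yields $Z(\Omega)$ exactly. Everything else is routine Vandermonde interpolation, carried out in time polynomial in the size of $\Omega$.
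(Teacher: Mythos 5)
Your proposal is correct and follows essentially the same interpolation argument as the paper: diagonalize $H$ via $Q$, realize $D^k$ by chaining, stratify assignments by the color pattern at the replaced vertices, solve a Vandermonde system over the distinct products $\lambda_1^{r_1}\lambda_2^{r_2}\lambda_3^{r_3}\lambda_4^{r_4}$, and invoke the eigenvalue hypothesis precisely to ensure the $r_1=0$ types are never conflated with $r_1\ge 1$ types. The only superficial difference is notational (the paper normalizes by $\lambda_4^{mk}$ and merges duplicate columns iteratively, whereas you enumerate the distinct products directly), which changes nothing of substance.
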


Similarly, we have Lemma~\ref{eq3_4}.

\begin{lemma} \label{eq3_4}
Let $H:\{R,G,B,W\}^2 \to \mathbb{R}$ be a rank 4 symmetric binary function.  Let $Q$ be the orthogonal matrix such that $Q H Q^T = \left[ \begin{smallmatrix} \lambda_1 & 0 & 0 & 0 \\ 0 & \lambda_2 & 0 & 0 \\ 0 & 0 & \lambda_3 & 0 \\ 0 & 0 & 0 & \lambda_4 \end{smallmatrix}\right]$ where $\lambda_1 \lambda_2 \lambda_3 \lambda_4 \ne 0$. If the four eigenvalues $\lambda_1, \lambda_2, \lambda_3, \lambda_4$ further satisfy the condition: for all $s,t, a,b \in \mathbb{Z}$, 
$ s, t \ge 0, s+t>0$, if $a + b = s + t $ then $ \lambda_1^s \lambda_2^t \ne \lambda_3^a \lambda_4^b $. Then for any $\mf{F}$ containing $H$, we have $$ \Holant(Q \mf{F} \cup \{=_{B,W}\}) \le_T \Holant(\mf{F}) .$$

\end{lemma}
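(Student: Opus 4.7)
The plan is a standard polynomial (Vandermonde / Lagrange) interpolation on the diagonalized form of $H$. First I would apply the holographic transformation by $Q$: since $Q$ is orthogonal, $=_2$ is preserved and $\Holant(\mf{F}) \equiv_T \Holant(Q\mf{F})$. Under this transformation the binary signature $H$ becomes the diagonal matrix $D = \operatorname{diag}(\lambda_1,\lambda_2,\lambda_3,\lambda_4) \in Q\mf{F}$, so it suffices to interpolate $=_{B,W} = \operatorname{diag}(0,0,1,1)$ from $D$ in the presence of $=_2$.

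For each integer $k \ge 1$, the chain gadget obtained by connecting $k$ copies of $D$ in series via $=_2$ realizes the binary signature $D^k = \operatorname{diag}(\lambda_1^k,\lambda_2^k,\lambda_3^k,\lambda_4^k)$. Given any signature grid $\Omega$ for $\Holant(Q\mf{F} \cup \{=_{B,W}\})$ with $m$ occurrences of $=_{B,W}$, form $\Omega_k$ by replacing each such occurrence with the $k$-chain. Because both $D^k$ and $=_{B,W}$ are diagonal, the only non-vanishing assignments force the two edges incident at each of these $m$ special vertices to have the same color. Grouping these assignments by the count vector $\mathbf{n}=(n_1,n_2,n_3,n_4)$ with $\sum_i n_i = m$ (where $n_i$ is the number of special vertices colored $i$) gives
\[
\Holant(\Omega_k) \;=\; \sum_{\mathbf{n}} N_{\mathbf{n}}\, r_{\mathbf{n}}^{\,k}, \qquad r_{\mathbf{n}} := \prod_{i=1}^{4} \lambda_i^{n_i},
\]
where $N_{\mathbf{n}}$ collects the contribution of the rest of $\Omega$ over all color assignments to the special vertices with count pattern $\mathbf{n}$. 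The target value is $\Holant(\Omega) = \sum_{\mathbf{n}:\, n_1=n_2=0} N_{\mathbf{n}}$.

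Next I would show that the hypothesis on the eigenvalues is exactly what separates the target $r$-values from the non-target ones. For a target $\mathbf{n}$ (so $n_1=n_2=0$, $n_3+n_4=m$) and a non-target $\mathbf{n}'$ (so $n_1'+n_2'>0$, $\sum n_i' = m$), set $s = n_1'$, $t = n_2'$, $a = n_3 - n_3'$, $b = n_4 - n_4'$. Then $s,t\ge 0$, $s+t>0$, and $a+b = (n_3+n_4)-(n_3'+n_4') = n_1'+n_2' = s+t$, so the hypothesis $\lambda_1^s\lambda_2^t \ne \lambda_3^a\lambda_4^b$ says exactly $r_{\mathbf{n}'}\ne r_{\mathbf{n}}$. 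Using Lagrange interpolation on the (at most $\binom{m+3}{3}=O(m^3)$) distinct values of $r_{\mathbf{n}}$, I produce coefficients $c_0,\ldots,c_{K-1}$ with $\sum_{j} c_j r_{\mathbf{n}}^{\,j} = [\,n_1=n_2=0\,]$ for every $\mathbf{n}$ at level $m$. Combining, $\Holant(\Omega) = \sum_j c_j \Holant(\Omega_j)$, and each $\Holant(\Omega_j)$ is a single oracle call to $\Holant(\mf{F})$ after the orthogonal transformation is undone. The number of oracle calls and the coefficient computation are polynomial in $m$, giving the claimed polynomial-time reduction.

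The main obstacle is the initial bookkeeping that produces the clean expansion $\Holant(\Omega_k)=\sum_{\mathbf{n}} N_{\mathbf{n}} r_{\mathbf{n}}^{\,k}$ together with the precise matching of the lemma's inequality $\lambda_1^s\lambda_2^t\ne\lambda_3^a\lambda_4^b$ to target/non-target distinctness of the $r$-values; once this is in place, the rest is standard Vandermonde interpolation within each fixed level $m$. The same template proves Lemma~\ref{eq3_3}, the only change being the partition of $\{1,2,3,4\}$ into the single ``target'' index and the three ``non-target'' indices, which re-expresses the exponent relation as $s = a+b+c$ with $a,b,c\in\mathbb{Z}$.
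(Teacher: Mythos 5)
Your proof is correct and takes essentially the same route as the paper: diagonalize $H$ by the orthogonal $Q$, realize $D^k$ via $k$-chain gadgets, stratify the Holant sum by color multiplicities of the special vertices, and observe that the eigenvalue hypothesis makes the $r$-values of target count vectors ($n_1=n_2=0$) disjoint from those of non-target ones, so a polynomial-size Vandermonde system (packaged here as Lagrange interpolation) isolates the desired sum $\sum_{n_1=n_2=0} N_{\mathbf{n}}$. The only cosmetic difference is that the paper solves the reduced Vandermonde system after merging columns with equal ratios, and it gives the full argument for Lemma~\ref{eq3_3} while noting that Lemma~\ref{eq3_4} follows by the easy adaptation you spell out.
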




Let $\mathcal{U}$ be the set of all complex unaries and $f$ be a possibly complex signature, both over domain 3.
Then by definition,  $\Holant^*(f) =  \Holant(f \cup \mathcal{U})$. 
When $f$ is a real-valued signature, we temporarily define $\Holant^{r*}(f)$ as the $\Holant$ problem where only all the \textbf{real} unaries are available.
In Lemma~\ref{star}, we prove that in fact the complexity of the problem $\Holant^{r*}(f)$ is the same
as  $\Holant^{*}(f)$.
(Hence, after Lemma~\ref{star}, this new
notation $\Holant^{r*}(f)$ will be seen as unimportant, as far as the complexity of the problem  is concerned.)
 The same lemma  also holds for any domain $k$; the proof of Lemma~\ref{star} is easily adapted.

 \begin{lemma} \label{star}
 For any real signature $f$ over domain 3, we have 
  $$ \Holant^*(f) \le_T \Holant^{r*}(f). $$
 \end{lemma}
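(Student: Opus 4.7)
The plan is to use polynomial interpolation to simulate every complex unary by a family of real unaries. Given an instance $\Omega$ of $\Holant^*(f)$, let $u_1, \dots, u_k$ be the complex unary signatures attached at the degree-$1$ vertices of $\Omega$; since $k \le |V(\Omega)|$, $k$ is polynomial in the input size. Decompose each unary into its real and imaginary parts, $u_j = a_j + i\, b_j$ with $a_j, b_j \in \mathbb{R}^3$.

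First I would introduce a formal real parameter $x$ and define the real unary $u_j(x) = a_j + x\, b_j \in \mathbb{R}^3$, together with the parametric signature grid $\Omega(x)$ obtained from $\Omega$ by replacing each $u_j$ with $u_j(x)$. At $x = i$ we recover the original instance, so $\Holant_{\Omega} = p(i)$ where $p(x) := \Holant_{\Omega(x)}$. For every real $x$, the grid $\Omega(x)$ contains only the real signature $f$ and real unaries, hence its Holant value is a legitimate query to the $\Holant^{r*}(f)$ oracle.

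The key observation is that $p(x)$ is a polynomial in $x$ of degree at most $k$ with real coefficients. This is immediate from the Holant definition: for each fixed edge assignment $\sigma$, the corresponding summand factors as a product of one scalar per vertex, where each of the $k$ unary factors $u_j(x)(\sigma(e_j)) = a_j(\sigma(e_j)) + x\, b_j(\sigma(e_j))$ is linear in $x$ with real coefficients, and every factor coming from a copy of $f$ is real and constant in $x$. Summing over all $\sigma$ produces a real-coefficient polynomial of degree at most $k$.

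The reduction is then routine: query the $\Holant^{r*}(f)$ oracle at $k+1$ distinct small-integer values $x_0, \dots, x_k$ (e.g.\ $x_\ell = \ell$), invert the resulting $(k+1)\times(k+1)$ Vandermonde system in polynomial time to recover the coefficients of $p$, and finally evaluate $p(i)$ to output $\Holant_{\Omega}$. The only mildly tedious point to check is that the $k+1$ oracle queries have input bit-length polynomial in $|\Omega|$, which is immediate because the $x_\ell$ are small integers and the $a_j, b_j$ come directly from the input. There is no genuine obstacle here; the argument uses neither the arity $3$ nor the domain size $3$, which is consistent with the paper's remark that the lemma extends to arbitrary finite domain size $k$.
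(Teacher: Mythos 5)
Your proof is correct, and it takes a genuinely different and cleaner route than the paper's. You introduce a single formal parameter $x$, replace every complex unary $u_j = a_j + i\,b_j$ simultaneously by the real unary $a_j + x\,b_j$, observe that $p(x) := \Holant_{\Omega(x)}$ is a real-coefficient univariate polynomial of degree at most $k$ (one linear factor per unary occurrence), and recover $\Holant_\Omega = p(i)$ from $k+1$ real evaluations by a one-dimensional Vandermonde inversion. The paper instead first invokes the domain-3 $\Holant^*$ dichotomy of~\cite{CaiLX13} to reduce to the case of a \emph{constant-size} witness set of complex unaries, then removes a single complex unary $u=[x,y,z]$ at a time by stratifying the Holant sum over the counts $(i,j)$ of $R$'s and $G$'s assigned to the $u$-vertices and interpolating the $\binom{m+2}{2}$ quantities $\rho_{ij}$ via real substitutes $[3^k,2^k,1]$, closing with an induction on the (constant) number of complex unaries. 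Your single-shot parametrization sidesteps that entire detour: you never need the dichotomy (so there is no issue about why a finite witness set of unaries exists, nor why $\Holant^*(f)$ is ``contained'' in a class for which a \#P-hard oracle suffices), the number of complex unaries $k$ may grow with the instance without any blowup, and, since nothing in your argument uses arity~3 or domain size~3, it vindicates the paper's remark that the lemma holds over any domain --- something the paper's route cannot claim for free, since it relies on the domain-specific dichotomy theorem it is trying to bootstrap. The only thing the paper's route buys is another exhibition of the stratification-plus-Vandermonde technique used elsewhere (Lemmas~\ref{eq3}--\ref{eq3_4}); as a proof of Lemma~\ref{star} itself, your argument is preferable.
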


\begin{proof}
    By Theorem 3.1 in~\cite{CaiLX13}, the problem $\Holant^*(f)$ is either polynomial-time solvable, or it is \#P-hard. If it is polynomial-time solvable, then the reduction trivially holds. Otherwise, it is \#P-hard. This means  that 
    there exist finitely many unaries $u_1, u_2, \dots, u_k$, possibly complex-valued, such that the problem $\Holant(\{f, u_1, \dots, u_k\}) $ is \#P-hard. 
    We
   prove that  $\Holant(\{f, u\}) \le_T \Holant^{r*}({f})$,
   for any complex-valued  $u$, and then use induction,
   as $k$ is a constant. 

    Now consider an arbitrary instance $I$ of $\Holant( \{f, u\})$ and suppose the  unary 
    function $u$ appears $m$ times in $I$. We write $u = [x,y,z]$ where $x,y,z \in \mathbb{C}$. 
    We may assume $u \ne 0$, and by symmetry we may assume $z \ne 0$. We stratify the edge assignments according to the number of Red, Green and Blue assigned to the $u$'s. Specifically, let $\rho_{ij}$ denote the sum of products of
    the evaluations of  all $f$'s, where the sum is over all assignments  with exactly
    $i$ many times Red, $j$ many times Green, and $(m-i-j)$ many times Blue are assigned to the $u$'s. Then the Holant value on the instance $I$ can be written as $\sum\limits_{i=0}^m \sum\limits_{j=0}^{m-i} \rho_{ij}x^i y^j z^{m-i-j} = z^m\sum\limits_{i=0}^m \sum\limits_{j=0}^{m-i} \rho_{ij}(\frac{x}{z})^i (\frac{y}{z})^j.$ Observe that once we know all the values of $\rho_{ij}$, we can compute the Holant value of $I$ in P-time. We now interpolate the values of $\rho_{ij}$.

    We construct the instances $I_k$, $1\le k \le \binom{m+2}{2}$, for the problem $\Holant^{r*}({f})$. For any such $k$, let $I_k$ be the same signature grid as $I$ except we replace every appearance of $u$ by the real unary $[3^k,2^k, 1]$. Then, the Holant value for $I_k$ equals to $\sum\limits_{i=0}^m \sum\limits_{j=0}^{m-i} \rho_{ij}3^{ki} 2^{kj}$. Therefore, we can write a non-degenerate Vandermonde system, where the matrix has entries $3^{ki}2^{kj}$ with the columns indexed by lexicographical order of the tuples $(i,j)$ and the rows indexed by $k$, and the unknown variables are $\rho_{ij}$. We can then solve for $\rho_{ij}$ and therefore compute the Holant value of $I$ in polynomial time. This finishes the proof of the lemma.
    
\end{proof}

%
%
%
\bibliographystyle{splncs04}
\bibliography{main}
%





\newpage
\section*{Appendix}

We start with the proof of Lemma~\ref{eq3}.

\begin{proof}

Fix an instance $I$  of $\Holant(\mf{F} \cup\{=_{G,B,W}\})$. Suppose the signature $=_{G,B,W}$ appears $m$ times in $I$. 
 To have a nonzero contribution to the Holant sum, the assignments given to any occurrence of the $=_{G,B,W}$ signatures must be  $(G,G) $, $(B,B)$ or $(W,W)$.
 We can stratify the edge assignments in the Holant sum of $I$ according to the number of $(G,G)$, $(B,B)$ and $(W,W)$  assigned to the occurrences of $=_{G,B,W}$.
 Specifically, let $\rho_{ij}$ denote the sum of products of the evaluations of all other signatures, 
 where the sum is over all assignments with exactly $i$ many times  $(G,G)$, $j$ many times $(B,B)$ and $(m-i-j)$ many times $(W,W)$ assigned to those $=_{G,B,W}$ signatures. 
 Then the Holant value of the instance $I$  can be written as $\sum_{i=0}^{m}\sum_{j=0}^{m-i}\rho_{ij}$.
 
 Let $K = \binom{m+2}{2}$. Now we construct from $I$ a sequence of instances $I_k$ of $\Holant(\mf{F})$ indexed by $k$, $1 \le k \le K$, by replacing each occurrence of $(=_{G,B,W})$ in $I$
 with a chain of $k$ copies of the function $H$. 
 The signature of the $k$-chain is 
 $H^k  = \left[ \begin{smallmatrix} 0 & 0 & 0 & 0 \\ 0 &\lambda_1^k & 0 & 0 \\ 0 &0 & \lambda_2^k & 0 \\ 0 & 0 & 0 & \lambda_3^k \end{smallmatrix}\right]$.
 This effectively replaces each
  occurrence of $(=_{G,B,W})$ in $I$
 by $H^k$.
 Let $x_{ij}$ be $(\frac{\lambda_1}{\lambda_3})^{i} (\frac{\lambda_2}{\lambda_3})^{j}$. 
 A moment of reflection shows that the value of the instance $I_k$ is $$\sum\limits_{i=0}^{m}\sum\limits_{j=0}^{m-i} \rho_{ij} \lambda_1^{ki}\lambda_2^{kj}\lambda_3^{k(m-i-j)}  = \lambda_3^{mk} \sum\limits_{i=0}^{m}\sum\limits_{j=0}^{m-i} \rho_{ij} 
 x_{ij}^k
 . $$
 
 Then we can view the above as a linear system where the coefficients are 
 $x_{ij}^k$
 and the variables are $\rho_{ij}$. 
 Two  columns indexed by $(i,j) \ne (i',j')$ are identical in the matrix
 $
\begin{bmatrix}
    1 & \cdots & x_{ij} & \cdots & x_{i' j'} & \cdots \\
    1 & \cdots & x_{ij}^2 & \cdots & 
    x_{i' j'}^2 & \cdots \\
    \vdots & \ddots & \vdots & \ddots &\vdots & \ddots \\
    1 & \cdots & x_{ij}^K & \cdots & x_{i' j'}^K & \cdots
\end{bmatrix}$, 
iff  $x_{ij} = x_{i' j'}$.
%
 We can combine these two columns and create a new variable representing the sum of $\rho_{ij}$ and $ \rho_{i'j'}$. Repeating in this way until no two columns are the same, we will arrive at a 
 matrix of $K' \le K$ columns
 such that the first $K'$ rows form a  square Vandermonde matrix with full rank. We can then solve the Vandermonde system and take the
 sum of all the  variables. This is the sum of the original $\rho_{ij}$'s, which is exactly the Holant value on $I'$.
 The proof is now complete.

\end{proof}

The proof of Lemma~\ref{eq3_2} can be easily adapted from the proof of Lemma~\ref{eq3}. We now continue with the proof of Lemma~\ref{eq3_3}.

\begin{proof}
We have $$  \Holant(Q\mf{F}) \equiv_T \Holant(\mf{F}), $$
and the binary function $Q H Q^T=\left[ \begin{smallmatrix} \lambda_1 & 0 & 0 & 0 \\ 0 & \lambda_2 & 0 & 0 \\ 0 & 0 & \lambda_3 & 0 \\ 0 & 0 & 0 & \lambda_4 \end{smallmatrix}\right] \in Q\mf{F}$ because $H\in \mf{F}$. 

 Consider an instance $I$ of $\Holant(Q\mf{F} \cup \{=_{G,B,W}\})$. Suppose the function $(=_{G,B,W}) = \left[ \begin{smallmatrix} 0 & 0 & 0 & 0 \\ 0 & 1 & 0 & 0 \\ 0 & 0 & 1 & 0 \\ 0 & 0 & 0 & 1 \end{smallmatrix}\right]$ appears $m$ times in $I$. 
 Similar to the proof of Lemma~\ref{eq3}, 
 we can stratify the edge assignments in the Holant sum according to the number of $(R,R)$, $(G,G)$, $(B,B)$ and $(W,W)$  assigned to the occurrences of $(=_{G,B,W})$. Specifically, 
 let $\rho_{\ell ij}$ denote the sum of products of the evaluations of all other signatures, where the sum is over all assignments with exactly $\ell$ many times $(R,R)$, $i$ many times  $(G,G)$, $j$ many times $(B,B)$ and $(m-\ell-i-j)$ many times $(W,W)$ assigned to those $(=_{G,B,W})$ signatures.
  Then, the Holant value on the instance $I$  can be written as $\sum_{i=0}^{m}\sum_{j=0}^{m-i}\rho_{0ij}$, since any edge assignment that makes nonzero contribution in the Holant sum must assign 0 time $(R,R)$ to $(=_{G,B,W})$.
 
 Now we construct from $I$ a sequence of instances $I_k$ for $\Holant(Q\mf{F})$ indexed by $k$, $1 \le k \le \binom{m+3}{3}$, by replacing each occurrence of $(=_{G,B,W})$ with a chain of $k$ copies of the function $\left[ \begin{smallmatrix} \lambda_1 & 0 & 0 & 0 \\ 0 & \lambda_2 & 0 & 0 \\ 0 & 0 & \lambda_3 & 0 \\ 0 & 0 & 0 & \lambda_4 \end{smallmatrix}\right]$. 
That is, each occurrence of $\left[ \begin{smallmatrix} 0 & 0 & 0 & 0 \\ 0 & 1 & 0 & 0 \\ 0 & 0 & 1 & 0 \\ 0 & 0 & 0 & 1 \end{smallmatrix}\right]$ is replaced by $\left[ \begin{smallmatrix} \lambda_1^k & 0 & 0 & 0 \\ 0 & \lambda_2^k & 0 & 0 \\ 0 & 0 & \lambda_3^k & 0 \\ 0 & 0 & 0 & \lambda_4^k \end{smallmatrix}\right]$.
 Let $x_{\ell ij}$ be $(\frac{\lambda_1}{\lambda_4})^{\ell }(\frac{\lambda_2}{\lambda_4})^{i}(\frac{\lambda_3}{\lambda_4})^{j}$.
 A moment of reflection shows that the Holant value of the instance $I_k$ is $$\sum\limits_{\ell=0}^{m}\sum\limits_{i=0}^{m-\ell}\sum\limits_{j=0}^{m-\ell-i} \rho_{\ell ij} \lambda_1^{\ell k}\lambda_2^{ik}\lambda_3^{jk}\lambda_4^{(m-\ell-i-j)k} = \lambda_4^{mk}\sum\limits_{\ell=0}^{m}\sum\limits_{i=0}^{m-\ell}\sum\limits_{j=0}^{m-\ell-i} \rho_{\ell ij} x_{\ell ij}^k.$$
 
 Then we can view the above as a linear system where the coefficient matrix has entries $x_{\ell ij}^k$ with the columns indexed by lexicographical order of the tuples $(\ell,i,j)$ and the rows indexed by $k \in \mathbb{Z}_+$, and the unknown variables are $\rho_{\ell ij}$.  We will take $k =1,\ldots, K$,
 where $K = \binom{m+3}{3}$.
 We combine two columns 
 indexed by $(\ell, i, j) \ne (\ell', i', j')$ 
 if the two columns  are equal. This happens iff  $\lambda_1^{\ell}\lambda_2^{i}\lambda_3^{j}\lambda_4^{m-\ell-i-j} = \lambda_1^{\ell'}\lambda_2^{i'}\lambda_3^{j'}\lambda_4^{m-\ell'-i'-j'}$. In that case  we create a new  variable $\rho'$ to be the sum of the two variables $ \rho_{\ell ij}$ and $ \rho_{\ell'i'j'}$. 
 We keep combining the columns in this way until no two columns are the same, at which point we arrive at a Vandermonde matrix with full rank.
 Notice that what we want is $\Holant(I) = \sum_{i=0}^{m}\sum_{j=0}^{m-i}\rho_{0ij}$. 
 Therefore, as long as
 $(0, i, j)$ and $ (\ell', i', j')$
 for $\ell' >0$ are never combined,
 the sum of $\rho_{0ij}$'s and the sum of $\rho_{\ell'i'j'}$'s are not ``mixed''.  This condition is
 expressed as:
For all $i,j,i',j' \ge 0$,
and $\ell' > 0$, 
$\lambda_2^{i}\lambda_3^{j}\lambda_4^{m-i-j} \ne \lambda_1^{\ell'}\lambda_2^{i'}\lambda_3^{j'}\lambda_4^{m-\ell'-i'-j'}$. This is guaranteed by the assumption of the lemma. 
 
 Therefore, we can solve the Vandermonde system, 
  summing up the ones coming from the form $\rho_{0ij}$ where $i+j \le m$, to get $\Holant(I)$.
 Therefore, 
 $\Holant(Q\mf{F} \cup \{=_{G,B,W}\}) \le_T \Holant(\mf{F})$ and the lemma is proved.

\end{proof}

The proof of Lemma~\ref{eq3_4} can be easily adapted from the proof of Lemma~\ref{eq3_3}. We will define  
$\rho_{\ell i j}$'s  similarly.
We want to compute the Holant value $\sum_{j=0}^{m}\rho_{00j}$.
The sum of $\rho_{00j}$'s and the sum of $\rho_{\ell' i' j'}$'s are not ``mixed'' as long as for all
$\ell', i'\ge 0$, with $\ell'+i' >0$, and all $j, j' \geq 0$, 
we have $\lambda_3^j \lambda_4^{m-j} \neq \lambda_1^{\ell'} \lambda_2^{i'} \lambda_3^{j'} \lambda_4^{m-\ell'-i'-j'}$, which is guaranteed by the assumption of Lemma~\ref{eq3_4}.

\end{document}